\documentclass[a4paper,10pt]{article}
\usepackage{amsthm,amsmath,amssymb}
\usepackage{graphicx}
\usepackage{graph}
\usepackage{tikz}
\newtheorem{Theorem}{Theorem}
\newtheorem{Definition}{Definition}
\newtheorem{Lemma}{Lemma}
\newtheorem{Corollary}{Corollary}
\newtheorem{Remark}{Remark}
\newtheorem{Observation}{Observation}

\newcommand{\com}[1]{}
\newcommand{\boxi}{\mathrm{box}}
\newcommand{\tw}{\mathrm{tw}}
\newcommand{\pw}{\mathrm{pw}}

\textwidth 6.6in
\oddsidemargin -0.1in
\evensidemargin -0.1in
\headsep 0in
\headheight 0in
\textheight 9,5in
\begin{document}
\title{Representing graphs as the intersection of cographs\\and threshold graphs}
\bibliographystyle{siam}
\author{Daphna Chacko\thanks{Dept. of Computer Science and Engineering, National Institute of Technology, Calicut. e-mail: \texttt{daphna.chacko@gmail.com}} \and Mathew C. Francis\thanks{Indian Statistical Institute, Chennai Centre. e-mail: \texttt{mathew@isichennai.res.in}}}
\date{}
\maketitle
\begin{abstract}
A graph $G$ is said to be the intersection of graphs $G_1,G_2,\ldots,G_k$ if $V(G)=V(G_1)=V(G_2)=\cdots=V(G_k)$ and $E(G)=E(G_1)\cap E(G_2)\cap\cdots\cap E(G_k)$.
For a graph $G$, $\dim_{COG}(G)$ (resp. $\dim_{TH}(G)$) denotes the minimum number of cographs (resp. threshold graphs) whose intersection gives $G$. We present several new bounds on these parameters for general graphs as well as some special classes of graphs.
It is shown that for any graph $G$: (a) $\dim_{COG}(G)\leq\tw(G)+2$, (b) $\dim_{TH}(G)\leq\pw(G)+1$, and (c) $\dim_{TH}(G)\leq\chi(G)\cdot\boxi(G)$, where $\tw(G)$, $\pw(G)$, $\chi(G)$ and $\boxi(G)$ denote respectively the treewidth, pathwidth, chromatic number and boxicity of the graph $G$. We also derive the exact values for these parameters for cycles and show that every forest is the intersection of two cographs. These results allow us to derive improved bounds on $\dim_{COG}(G)$ and $\dim_{TH}(G)$ when $G$ belongs to some special graph classes.
\end{abstract}
\section{Introduction}

All graphs in this paper are simple, finite and undirected, unless otherwise mentioned. Let $G(V,E)$ be a graph, where $V(G)$ is the vertex set and $E(G)$ is the edge set of $G$. Given graphs $G_1,G_2,\ldots,G_k$ such that $V(G)=V(G_1)=V(G_2)=\cdots=V(G_k)$, we say that $G=G_1\cap G_2\cap\cdots\cap G_k$ if $E(G)=E(G_1)\cap E(G_2)\cap\cdots\cap E(G_k)$; in this case we say that ``$G$ is the intersection of the graphs $G_1,G_2,\ldots,G_k$'', or that ``$G$ can be represented as the intersection of $k$ graphs $G_1,G_2,\ldots,G_k$''.

Let $\mathcal{A}$ be a class of graphs. We are concerned with the question of representing a graph $G$ as the intersection of a small number of graphs from $\mathcal{A}$. Kratochv\'il and Tuza~\cite{kratochvil1994intersection} defined the \emph{intersection dimension} of a graph $G$ with respect to a graph class $\mathcal{A}$, denoted by $\dim_{\mathcal{A}}(G)$, as the smallest number of graphs from $\mathcal{A}$ whose intersection gives $G$. This is formally defined below.

\begin{Definition}[Intersection dimension of graph~\cite{kratochvil1994intersection}]
Given a class $\mathcal{A}$ of graphs and a graph $G(V,E)$, the intersection dimension of $G$ with respect to $\mathcal{A}$ 
is defined as:
\begin{center}
$\dim_{\mathcal{A}}(G) = \min \{k\colon \exists G_1,G_2,\ldots,G_k\in\mathcal{A}$ such that 
$G= \bigcap\limits_{i=1}^{k} G_i\}$
\end{center}
\end{Definition}

Kratochv\'il and Tuza also note that for a graph class $\mathcal{A}$, $\dim_{\mathcal{A}}(G)$ exists for every graph $G$ if and only if $\mathcal{A}$ contains all complete graphs and all graphs that can be obtained by removing an edge from a complete graph.
The notion of intersection dimension was introduced as a generalization of some well-studied notions like boxicity, circular dimension and overlap dimension of graphs (see~\cite{kratochvil1994intersection}).

A ``complement reducible graph'' or \emph{cograph} is a graph that can be recursively constructed from copies of $K_1$ (the graph containing one vertex and no edges) using the disjoint union and complementation operations. Cographs turn out to be exactly those graphs that do not contain a $P_4$---a path on four vertices---as an induced subgraph~\cite{corneil1981complement}.

A \emph{split graph} $G$ is a graph whose vertices can be partitioned into two sets, one of which is an independent set in $G$ and the other a clique in $G$. Cographs that are also split graphs form the class of graphs known as \emph{threshold graphs}~\cite{Brandstadt:1999:GCS:302970}. Threshold graphs have been widely studied in the literature and have several different equivalent definitions (see~\cite{mahadev1995threshold,golumbic}). For example, they are exactly the graphs that do not contain an induced subgraph isomorphic to a $P_4$, $2K_2$ (the graph having four vertices and two disjoint edges) or $C_4$ (the cycle on four vertices)~\cite{chvatalhammer}. In fact, this characterization follows from the fact that split graphs are exactly those graphs that do not contain a $2K_2$, $C_4$ or $C_5$ (the cycle on five vertices) as an induced subgraph~\cite{foldes1977split}.

In this paper, we use $COG$ and $TH$ to denote the class of cographs and the class of threshold graphs respectively.
Thus, $\dim_{COG}(G)$ is the intersection dimension of a graph $G$ with respect to cographs, and for short, we call this the ``cograph dimension of $G$''. Similarly, we shall call $\dim_{TH}(G)$ the ``threshold dimension of $G$''.

We say that a graph $G$ ``is covered by'' graphs $G_1,G_2,\ldots,G_k$, if $V(G)=V(G_1)=V(G_2)=\cdots=V(G_k)$ and $E(G)=E(G_1)\cup E(G_2)\cup\cdots\cup E(G_k)$. Note that since cographs and threshold graphs are both closed under complementation, it follows that $\dim_{COG}(\overline{G})$ and $\dim_{TH}(\overline{G})$ denote respectively the smallest number of cographs, respectively threshold graphs, using which a graph $G$ can be covered. Chv\'atal and Hammer introduced the parameter $t(G)$, defined as the smallest number of threshold graphs required to cover a graph $G$~\cite{chvatalhammer}, the study of which has resulted in several influential papers~\cite{yannakakis1982complexity,cozzens1984threshold}. The parameter $t(G)$ has been called the ``threshold dimension'' of $G$ due to the equivalent definition of this parameter as the smallest number of linear inequalities on $|V(G)|$ variables such that every inequality is satisfied by a vector in $\{0,1\}^{|V(G)|}$ if and only if it is the characteristic vector of an independent set in $G$ (refer~\cite{raschlesimon} for details). The parameter $t(G)$ is also known as the \emph{threshold cover} of $G$~\cite{cozzens1984threshold}. In this paper, we shall refer to $t(G)$ exclusively as the ``threshold cover'' of $G$. We reserve the term ``threshold dimension of $G$'' for $\dim_{TH}(G)$. Since  $t(G)=\dim_{TH}(\overline{G})$, our results about $\dim_{TH}(G)$ can also be thought of as results about $t(\overline{G})$. Thus $\dim_{TH}(G)$ is the minimum integer $k$ such that there exist $k$ linear inequalities
\begin{eqnarray*}
a_{11}x_1+a_{12}x_2+\cdots+a_{1n}x_n&\leq&t_1\\
a_{21}x_1+a_{22}x_2+\cdots+a_{2n}x_n&\leq&t_2\\
\vdots\\
a_{k1}x_1+a_{k2}x_2+\cdots+a_{kn}x_n&\leq&t_k
\end{eqnarray*}
on the variables $x_1,x_2,\ldots,x_n$, where $n=|V(G)|$, such that the characteristic vector of a set $S\subseteq V(G)$ satisfies all the inequalities if and only if $S$ is a clique in $G$. In other words, it is the minimum number of halfspaces in $\mathbb{R}^n$ whose intersection contains exactly those corners of the $n$-dimensional hypercube that correspond to cliques in $G$ (the corners of the $n$-dimensional hypercube are the points in $\{0,1\}^n$).

Hellmuth and Wieseke~\cite{hellmuth2015symbolic} showed that the problem of determining whether the edge set of an input graph can be written as the union of the edge sets of 2 cographs is NP-complete. As cographs are closed under complementation, this implies that the problem of determining whether $\dim_{COG}(G)\leq 2$ for an input graph $G$ is NP-complete. Yannakakis~\cite{yannakakis1982complexity} showed that the problem of determining if an input graph $G$ has $t(G)\leq k$ is NP-complete for every fixed $k\geq 3$. As $\dim_{TH}(G)=t(\overline G)$, this means that it is NP-hard to determine if the threshold dimension of a graph is at most $k$ for every fixed $k\geq 3$. Raschle and Simon~\cite{raschlesimon} showed that it can be decided in polynomial-time whether $t(G)\leq 2$ for an input graph $G$; thus the problem of checking whether an input graph has threshold dimension at most 2 is solvable in polynomial time.
Gimbel and Ne\v{s}et\v{r}il~\cite{gimnes} discuss the problem of deciding whether the vertex set of an input graph $G$ can be partitioned into $k$ parts such that the subgraph induced in $G$ by each part is a cograph and show that this problem is NP-complete for every fixed $k\geq 2$.

In this paper, we show three upper bounds for the cograph dimension and threshold dimension of general graphs in terms of their treewidth, pathwidth, chromatic number and boxicity. In particular, we show that for any graph $G$,
\begin{itemize}
\item $\dim_{COG}(G)\leq\tw(G)+2$,
\item $\dim_{TH}(G)\leq\pw(G)+2$, and
\item $\dim_{TH}(G)\leq\chi(G)\cdot\boxi(G)$.
\end{itemize}
Here $\tw(G)$, $\pw(G)$, $\chi(G)$ and $\boxi(G)$ denote respectively the treewidth, pathwidth, chromatic number and boxicity of the graph $G$ (treewidth and pathwidth are defined in Section~\ref{sec:cogdimtwd}, whereas chromatic number and boxicity are defined in Section~\ref{sec:threshboxichro}). Note that the upper bound of $\tw(G)+2$ on the cograph dimension of any graph $G$ is equal to the upper bound on boxicity proved in~\cite{chandran2007boxicity}. This is interesting considering that the boxicity of a graph $G$ is nothing but $\dim_{INT}(G)$ where $INT$ is the class of interval graphs.

Kratochv\'il and Tuza~\cite{kratochvil1994intersection} show that $\dim_{PER}(G)\leq 12$ (here, $PER$ denotes the class of permutation graphs) when $G$ is a planar graph and ask whether this bound can be improved. Since cographs and threshold graphs are subclasses of permutation graphs, any upper bound on $\dim_{COG}(G)$ or $\dim_{TH}(G)$ is also an upper bound on $\dim_{PER}(G)$. We show that if $G$ is a planar graph, then $\dim_{TH}(G)\leq 12$ and $\dim_{COG}(G)\leq 10$. We further give better upper bounds for both the parameters when $G$ belongs to some special subclasses of planar graphs. For example, for the case when $G$ is a planar bipartite graph, we improve the bound $\dim_{PER}(G)\leq 4$ given in~\cite{kratochvil1994intersection} to $\dim_{TH}(G)\leq 4$. These results are summarized in Table~\ref{tab:results} in Section~\ref{sec:conclusion}. We also show that cycles on more than 6 vertices cannot be represented as the intersection of two cographs, which allows us to derive the exact values for the cograph dimension and threshold dimension of every cycle.

The paper is organized as follows. We begin by studying the threshold and cograph dimensions of forests in Section~\ref{sec:forests}, and that of cycles in Section~\ref{sec:cycles}. We then derive the upper bound on threshold dimension in terms of boxicity and chromatic number in Section~\ref{sec:threshboxichro}. In Section~\ref{sec:cogdimtwd}, we derive upper bounds on the cograph dimension and threshold dimension of a graph in terms of its treewidth and pathwidth respectively. In Section~\ref{sec:vertpart}, using a modified form of a lemma from~\cite{kratochvil1994intersection}, we explore the connection between the acyclic chromatic number and star chromatic number of a graph and its cograph dimension, which was first investigated in~\cite{aravindcrs}. By combining this with the result obtained in Section~\ref{sec:forests}, we get new upper bounds on the cograph dimension for planar graphs and for planar graphs with lower bounds on girth.
In order to show that the technique using the star chromatic number cannot yield a bound on the cograph dimension of outerplanar graphs that is better than the one obtained using treewidth, we present an outerplanar graph whose star chromatic number can be proved to be at least 6. This graph, in our opinion, is simpler than the earlier known example, for which only a computer-aided proof is available. Some open questions and a table summarising the various upper bounds obtained for the cograph dimension and threshold dimension of the subclasses of planar graphs that are studied is given in Section~\ref{sec:conclusion}.
\medskip

The following section contains some preliminary observations and definitions which will be used later.

\section{Preliminaries}
Given a graph $G(V,E)$, we let $V(G)$ and $E(G)$ denote its vertex set and edge set respectively. For a vertex $u\in V(G)$, $N(u)$
denotes the set of neighbours of $u$ and $N[u]=N(u)\cup\{u\}$. Given a set $S\subseteq V(G)$, we denote by $G[S]$ the subgraph induced in $G$ by the vertices in $S$. An \emph{induced $P_4$ (resp. $2K_2$, $C_4$)} in $G$ is an induced subgraph of $G$ that is isomorphic to a $P_4$ (resp. $2K_2$, $C_4$).
\begin{Observation}
Let $\mathcal{A}$ be a class of graphs and let $G,G_1,G_2,\ldots,G_k$ be graphs such that $G=G_1\cap G_2\cap\cdots\cap G_k$. Then $\dim_{\mathcal{A}}(G)\leq \dim_{\mathcal{A}}(G_1)+\dim_{\mathcal{A}}(G_2)+\cdots+\dim_{\mathcal{A}}(G_k)$.
\end{Observation}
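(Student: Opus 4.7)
The plan is to unfold the definition of intersection dimension on each of the $G_i$ and then concatenate the resulting representations. For each $i \in \{1,2,\ldots,k\}$, let $d_i = \dim_{\mathcal{A}}(G_i)$. By definition, there exist graphs $H_{i,1}, H_{i,2}, \ldots, H_{i,d_i} \in \mathcal{A}$, all on the vertex set $V(G_i) = V(G)$, such that $G_i = H_{i,1} \cap H_{i,2} \cap \cdots \cap H_{i,d_i}$, i.e.\ $E(G_i) = \bigcap_{j=1}^{d_i} E(H_{i,j})$.

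Next I would substitute these representations into the given decomposition $G = G_1 \cap G_2 \cap \cdots \cap G_k$. Because edge-set intersection is associative and commutative, we get
\[
E(G) \;=\; \bigcap_{i=1}^{k} E(G_i) \;=\; \bigcap_{i=1}^{k} \bigcap_{j=1}^{d_i} E(H_{i,j}),
\]
which exhibits $G$ as the intersection of the $d_1 + d_2 + \cdots + d_k$ graphs $\{H_{i,j} : 1 \le i \le k,\, 1 \le j \le d_i\}$, each of which belongs to $\mathcal{A}$. Hence $\dim_{\mathcal{A}}(G) \le d_1 + d_2 + \cdots + d_k = \dim_{\mathcal{A}}(G_1) + \dim_{\mathcal{A}}(G_2) + \cdots + \dim_{\mathcal{A}}(G_k)$.

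There is no real obstacle here; the only thing that needs a brief mention is that all the graphs $H_{i,j}$ share the common vertex set $V(G)$, so their intersection is well-defined. This is immediate since $V(H_{i,j}) = V(G_i) = V(G)$ for every $i,j$. The argument uses nothing about the class $\mathcal{A}$ other than the fact that it is the class used in defining $\dim_{\mathcal{A}}$, so the observation holds uniformly for every such class.
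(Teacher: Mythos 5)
Your proof is correct and follows essentially the same route as the paper: expand each $G_i$ as an intersection of $\dim_{\mathcal{A}}(G_i)$ graphs from $\mathcal{A}$ and combine all of them into a single intersection representing $G$. Nothing is missing.
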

\begin{proof}
For each $i\in\{1,2,\ldots,k\}$, there exists a collection $\mathcal{H}_i\subseteq\mathcal{A}$ of at most $\dim_{\mathcal{A}}(G_i)$ graphs such that $G_i=\bigcap_{H\in\mathcal{H}_i} H$. Since $G=\bigcap_{i=1}^k G_i=\bigcap_{i=1}^k\bigcap_{H\in\mathcal{H}_i} H$, we now have that $\dim_{\mathcal{A}}(G)\leq\sum_{i=1}^k \dim_{\mathcal{A}}(G_i)$.
\end{proof}
\begin{Definition}[the join operation]
The \emph{join} of two vertex-disjoint graphs $G_1$ and $G_2$, denoted as $G_1+G_2$, is the graph $G$ having $V(G)=V(G_1)\cup V(G_2)$ and $E(G)=E(G_1)\cup E(G_2)\cup \{uv\colon u\in E(G_1), v\in E(G_2)\}$.
\end{Definition}

The join operation is called the ``Zykov sum'' operation in~\cite{kratochvil1994intersection}.

\begin{Definition}[the disjoint union operation]
The \emph{disjoint union} of two vertex-disjoint graphs $G_1$ and $G_2$, denoted as $G_1\uplus G_2$, is the graph $G$ having $V(G)=V(G_1)\cup V(G_2)$ and $E(G)=E(G_1)\cup E(G_2)$.
\end{Definition}

A class of graphs $\mathcal{A}$ is said to be ``closed'' under the join operation (resp. the disjoint union operation) if for any $G_1,G_2\in\mathcal{A}$, we have $G_1+G_2\in\mathcal{A}$ (resp. $G_1\uplus G_2\in\mathcal{A}$). The following observation is easy to see.

\begin{Observation}\label{obs:join}
Let $\mathcal{A}$ be a class of graphs that is closed under the join operation (resp. disjoint union operation). Then for graphs $G_1$ and $G_2$, $\dim_{\mathcal{A}}(G_1+G_2)$ (resp. $\dim_{\mathcal{A}}(G_1\uplus G_2))\leq\max\{\dim_{\mathcal{A}}(G_1),\dim_{\mathcal{A}}(G_2)\}$.
\end{Observation}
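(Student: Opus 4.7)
The plan is to take optimal intersection representations of $G_1$ and $G_2$ and combine the $j$-th graphs of the two representations using the same operation (join or disjoint union) that is being considered, then verify that the intersection of these combined graphs equals $G_1 + G_2$ (or $G_1 \uplus G_2$).

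In more detail, let $k = \max\{\dim_{\mathcal{A}}(G_1),\dim_{\mathcal{A}}(G_2)\}$. First I would write $G_1 = H_1^{(1)} \cap \cdots \cap H_k^{(1)}$ and $G_2 = H_1^{(2)} \cap \cdots \cap H_k^{(2)}$ with each $H_j^{(i)}\in\mathcal{A}$; if one of the representations uses fewer than $k$ graphs, I just repeat its last graph to pad the list, which clearly preserves the intersection. Next, for each $j\in\{1,\ldots,k\}$, let $H_j := H_j^{(1)} + H_j^{(2)}$ in the join case (resp. $H_j := H_j^{(1)} \uplus H_j^{(2)}$ in the disjoint-union case). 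Since $\mathcal{A}$ is closed under the corresponding operation, each $H_j\in\mathcal{A}$.

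It then remains to check that $G_1 + G_2 = \bigcap_{j=1}^{k} H_j$ (resp. $G_1 \uplus G_2 = \bigcap_{j=1}^{k} H_j$). This is a routine case split on where the endpoints of an edge lie: for an edge whose endpoints are both in $V(G_1)$, the cross-edges in $H_j$ are irrelevant and membership in $H_j$ reduces to membership in $H_j^{(1)}$, so the intersection picks out exactly $E(G_1)$; by symmetry, the same holds for edges inside $V(G_2)$. For an edge between $V(G_1)$ and $V(G_2)$, it lies in every $H_j$ in the join case (and in none of them in the disjoint-union case), matching the definition of $G_1+G_2$ (resp. $G_1\uplus G_2$).

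There is no real obstacle here; the only mild subtlety is the padding step, which is needed because the two representations may have different lengths and we want a single common index set over which to combine them. Once this is handled, the verification is entirely mechanical and the bound $\dim_{\mathcal{A}}(G_1 \circ G_2) \leq k$ follows immediately, where $\circ$ denotes whichever of $+$ or $\uplus$ is under consideration.
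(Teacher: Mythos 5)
Your argument is correct and is precisely the routine componentwise construction that the paper has in mind when it states this Observation without proof (``The following observation is easy to see''). The padding step is handled properly (repeating a graph in an intersection changes nothing), and the case analysis on edge locations is complete, so there is nothing to add.
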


Note that the class of cographs is closed under both the join and disjoint union operations whereas the class of threshold graphs is not closed under either operation. A class of graphs is said to be \emph{hereditary} if it is closed under taking induced subgraphs; for any graph $G$ in the class, every induced subgraph of $G$ is also in the class. From the definition of cographs, it follows that the only hereditary class of graphs that is closed under both the join and disjoint union operations is the class of cographs.

Partial 2-trees are exactly the graphs that have treewidth at most 2~\cite{Bodlaender98}. Outerplanar graphs are the planar graphs that have a planar embedding in which every vertex is on the boundary of the outer face. They form a subclass of partial 2-trees~\cite{Bodlaender98}. For any terminology or notation that is not defined herein, please refer~\cite{Diestel}.
\section{Forests}\label{sec:forests}
In this section, we shall show that every forest is the intersection of at most two cographs and then, using a known characterization of graphs that have a threshold cover of size 2, we show that there exist trees that are not the intersection of two threshold graphs. 

We first show a construction using which given any forest $F$, two cographs whose intersection gives $F$ can be constructed. We describe the construction for a tree, and the construction for forests as an extension to it.

Let $T$ be a tree in which one vertex $r$ has been arbitrarily selected to be the 
root. The ancestor-descendant and parent-child relations on $V(T)$ are then 
defined in the usual way (i.e., for $x,y\in V(T)$, $x$ is an ancestor of $y$ if and only if $x$ lies on the path between $r$ and $y$ in $T$; $x$ is the parent of $y$ if and only if $x$ is an ancestor of $y$ and $xy\in E(T)$). The vertices that are at an even distance from the 
root $r$ are called ``even vertices'' and those that are at an odd distance from 
the root $r$ are called ``odd vertices''. In the following, for any vertex $v\in 
V(T)$, we denote by $p(v)$ its parent vertex. We let $p(r)=r$.

Let $T_o$ (resp. $T_e$) denote the graph with vertex set $V(T_o)$ (resp. $V(T_e)$)
$=V(T)$ and $E(T_o)$ (resp. $E(T_e)$) $=E(T)\cup\{
uv\colon u$ is an odd (resp. even) vertex and $v$ is a descendant of $p(u)\}$.

\begin{Lemma}\label{lem:compiscograph}
Let $T$ be a tree with a root. Then both $T_o$ and $T_e$ are cographs.
\end{Lemma}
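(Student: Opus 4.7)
Plan: I would prove the lemma by simultaneous induction on $|V(T)|$, establishing that both $T_o$ and $T_e$ are cographs by exhibiting each as built from single vertices via the join and disjoint union operations. The base case $|V(T)|=1$ is trivial. For the inductive step, let $r$ be the root with children $c_1,\ldots,c_k$, and let $T_i$ be the subtree rooted at $c_i$; each $T_i$ is strictly smaller than $T$.

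The engine of the induction is an identity that swaps the ``odd'' and ``even'' constructions when we descend to a child subtree: $T_o[V(T_i)] = (T_i)_e$ and, symmetrically, $T_e[V(T_i)] = (T_i)_o$. The reason is that every $v\in V(T_i)\setminus\{c_i\}$ has parity in $T_i$ opposite to its parity in $T$ (a shift by one), while tree edges and parent pointers inside $V(T_i)\setminus\{c_i\}$ coincide in $T$ and $T_i$; the root $c_i$ of $T_i$ is even in $T_i$ with the convention $p_{T_i}(c_i)=c_i$, and a direct check shows $c_i$ is $T_o$-adjacent to every other vertex of $V(T_i)$, matching its role as a universal vertex of $(T_i)_e$. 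A brief case analysis verifies that no other added edges are gained or lost.

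For $T_o$, I first note that each $c_i$ is universal in $T_o$: since $c_i$ is odd with $p(c_i)=r$, the added-edge rule joins $c_i$ to every descendant of $r$ other than itself, which is $V\setminus\{c_i,r\}$, and the tree edge gives $c_ir$. So $\{c_1,\ldots,c_k\}$ induces a clique $K_k$, each vertex of which is adjacent to every vertex of $V\setminus\{c_1,\ldots,c_k\}$; hence $T_o = K_k + T_o[V\setminus\{c_1,\ldots,c_k\}]$. Next, $r$ has no $T_o$-neighbour outside $\{c_1,\ldots,c_k\}$ (it has no proper ancestor, hence no added edges), and one checks that no added edge of $T_o$ can cross between $V(T_i)\setminus\{c_i\}$ and $V(T_j)\setminus\{c_j\}$ for $i\ne j$, because the parent of an odd added-edge endpoint always lies inside a single $T_i$. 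Using the key identity,
\[
T_o \;=\; K_k + \Bigl(\{r\} \uplus \biguplus_{i=1}^k \bigl((T_i)_e - c_i\bigr)\Bigr),
\]
which is a cograph by the inductive hypothesis, since cographs are closed under vertex deletion, join, and disjoint union.

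The argument for $T_e$ is symmetric but simpler at the top: the convention $p(r)=r$ makes $r$ adjacent (via the added-edge rule) to every descendant of $r$, i.e.\ every other vertex, so $r$ is universal in $T_e$; combined with the absence of added edges between distinct $V(T_i)$'s we obtain $T_e = \{r\} + \biguplus_{i=1}^k T_e[V(T_i)] = \{r\} + \biguplus_{i=1}^k (T_i)_o$, a cograph by induction. The main obstacle is the careful verification of the identity $T_o[V(T_i)] = (T_i)_e$ and the no-crossing-edges claim between sibling subtrees; both reduce to a routine ancestor-versus-descendant case analysis depending on where the parent of a given odd vertex lies.
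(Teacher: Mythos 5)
Your proof is correct and follows essentially the same approach as the paper: identify the universal vertices (the children of $r$ in $T_o$, the root $r$ in $T_e$), peel them off as a join, and recurse on subtrees using closure of cographs under join and disjoint union. The only (cosmetic) difference is that you run a simultaneous induction descending one level at a time via the parity-flip identity $T_o[V(T_i)]=(T_i)_e$, whereas the paper recurses for $T_o$ directly on the grandchild subtrees (descending two levels, so $T_o$ calls $T_o$) and then obtains $T_e$ in a single reduction to the $T_o$ case; unrolling your recursion once yields exactly the paper's decomposition.
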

\begin{proof}
We shall first prove that $T_o$ is a cograph using induction on $|V(T)|$.
Clearly, $T_o$ is a cograph when $|V(T)|=1$, since $K_1$ is a cograph.
Let $T^1,T^2,\ldots,T^k$ be the trees which form the components
of $T-N[r]$. It is easy to see that for each $i\in\{1,2,\ldots,k\}$, there exists
exactly one vertex $r_i$ in $T^i$ such that in $T$, $p(p(r_i))=r$. Choose $r_i$ to be
the root of $T^i$, for each $i\in\{1,2,\ldots,k\}$. Observe that $T_o=(T[\{r\}]\uplus
T^1_o\uplus T^2_o\uplus\cdots\uplus T^k_o)+\sum_{u\in N(r)} T[\{u\}]$ (here, for a
vertex $v\in V(T)$, $T[\{v\}]$ refers to the subgraph of $T$ isomorphic to $K_1$
that contains just the vertex $v$). By the induction hypothesis, $T^1_o,T^2_o,\ldots,
T^k_o$ are all cographs. Since $K_1$ is a cograph and cographs are closed under the
join and disjoint union operations, we have that $T_o$ is a cograph.

Next let us prove that $T_e$ is a cograph.
Let $T^1,T^2,\ldots,T^k$ be the trees that form the components of $T-r$. For each
$i\in\{1,2,\ldots,k\}$, let $r_i$ denote the unique vertex in $T^i$ such that
in $T$, $p(r_i)=r$. Choose $r_i$ as the root of $T^i$, for each $i\in\{1,2,\ldots,k\}$.
Observe that $T_e=(T^1_o\uplus T^2_o\uplus\cdots\uplus T^k_o)+T[\{r\}]$. By our earlier
observation, we know that each of $T^1_o,T^2_o,\ldots,T^k_o$ are cographs. Therefore,
it follows that $T_e$ is a cograph.
\end{proof}

\begin{Lemma}\label{lem:cographdimtree}
Let $T$ be a tree with a root. Then $T=T_o\cap T_e$.
\end{Lemma}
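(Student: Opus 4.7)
The inclusion $E(T)\subseteq E(T_o)\cap E(T_e)$ is immediate from the definitions of $T_o$ and $T_e$, so the whole content of the lemma is the reverse inclusion: every edge of $T_o\cap T_e$ is already an edge of $T$. My plan is to assume, for contradiction, the existence of an edge $uv\in E(T_o)\cap E(T_e)\setminus E(T)$ and derive that $u$ and $v$ must be siblings in $T$, which is impossible because the endpoints of such an added edge turn out to have opposite parities.

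Concretely, fix such a hypothetical edge $uv$. Because $uv\notin E(T)$, when we examine why $uv\in E(T_o)$ we are forced into the added-edge clause: one of $u,v$ is odd and the other is a descendant of its parent. Say, without loss of generality, that $u$ is odd and $v$ is a descendant of $p(u)$. Now look at $T_e$: again $uv\notin E(T)$ forces the added-edge clause, and since $u$ is odd (hence not even) this clause must be witnessed by $v$; so $v$ is even and $u$ is a descendant of $p(v)$. I have thus boiled the situation down to: $u$ odd, $v$ even, $v$ is a descendant of $p(u)$, and $u$ is a descendant of $p(v)$.

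Next I would argue that these four conditions force $p(u)=p(v)$. The cleanest way is via the lowest common ancestor $c$ of $u$ and $v$ in $T$. Because $p(u)$ is an ancestor of both $u$ and $v$ (it is an ancestor of $u$ by definition and an ancestor of $v$ because $v$ is a descendant of it), we have that $p(u)$ lies on the root-to-$c$ path, so $\mathrm{depth}(c)\geq\mathrm{depth}(p(u))=\mathrm{depth}(u)-1$. Combined with $\mathrm{depth}(c)\leq\mathrm{depth}(u)$ (as $c$ is an ancestor of $u$), this leaves only $c=u$ or $c=p(u)$. The case $c=u$ means $u$ is an ancestor of $v$; then $p(v)$ being an ancestor of $u$ together with the depth inequalities forces either $u=v$ or $u=p(v)$, each contradicting our setup (the second giving $uv\in E(T)$). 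Hence $c=p(u)$, and by the symmetric argument with the roles of $u$ and $v$ swapped, $c=p(v)$ as well. So $p(u)=p(v)$, i.e.\ $u$ and $v$ are siblings.

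The final contradiction is the parity one: siblings have parents at the same depth, so they themselves have the same depth modulo $2$, whereas we established $u$ is odd and $v$ is even. This contradiction shows no such $uv$ exists, so $E(T_o)\cap E(T_e)\subseteq E(T)$, completing the proof. The only step that requires a bit of care is the LCA depth analysis; everything else is just unwinding the two clauses that defined $T_o$ and $T_e$, so I expect that ancestor-relation bookkeeping to be the sole obstacle worth spelling out carefully.
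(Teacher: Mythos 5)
Your proof is correct and follows essentially the same route as the paper's: unwind the two added-edge clauses for $T_o$ and $T_e$ and derive a parity contradiction. If anything, your explicit LCA/sibling analysis spells out a case the paper's three-line argument glosses over --- namely that the $T_o$-membership and the $T_e$-membership of the offending edge could a priori be witnessed by different endpoints, which is exactly the situation your argument reduces to siblings of equal depth and hence equal parity.
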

\begin{proof} 
Since each of $T_o$ and $T_e$ are supergraphs of $T$, we only need to show that $E(T_o)\cap E(T_e)
\subseteq E(T)$. Suppose for the sake of contradiction that there exists an edge
$xy\in E(T_o)\cap E(T_e)$ such that $xy\notin E(T)$. As $xy\in E(T_o)\cap E(T_e)$, we may assume without loss of generality that $y$ is a descendant of $p(x)$ in $T$. Since $xy\notin E(T)$, we have that $xy\notin E(T_o)$ if $x$ is even and $xy\notin E(T_e)$ if $x$ is odd. This contradicts our assumption that $xy\in E(T_o)\cap E(T_e)$. 
\end{proof}

Combining Lemma~\ref{lem:compiscograph} and Lemma~\ref{lem:cographdimtree}, we have the following theorem.
\begin{Theorem}
For any tree $T$, $\dim_{COG}(T)\leq 2$.
\end{Theorem}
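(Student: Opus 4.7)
The plan is simply to combine the two preceding lemmas. Given a tree $T$, pick an arbitrary vertex $r \in V(T)$ and root $T$ at $r$. This root determines the parity classification of vertices (even/odd based on distance from $r$) and the parent function $p(\cdot)$ needed to define the auxiliary supergraphs $T_o$ and $T_e$.

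Having fixed the root, form the two graphs $T_o$ and $T_e$ exactly as in the paragraph preceding Lemma~\ref{lem:compiscograph}. By Lemma~\ref{lem:compiscograph}, both $T_o$ and $T_e$ are cographs, so this gives us two graphs from the class $COG$. By Lemma~\ref{lem:cographdimtree}, we have $T = T_o \cap T_e$. Hence $T$ is representable as the intersection of two cographs, which by definition of $\dim_{COG}$ yields $\dim_{COG}(T) \leq 2$.

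There is no real obstacle at this step: all the combinatorial content has been absorbed into the two preceding lemmas, which respectively verify the cograph property (via the recursive decomposition using $\uplus$ and $+$) and verify that the ``extra'' edges added to $T_o$ and $T_e$ are disjoint (since an extra edge $xy$ with $y$ a descendant of $p(x)$ is placed in $T_o$ only when $x$ is odd, and in $T_e$ only when $x$ is even). The theorem is therefore obtained as an immediate corollary, and the proof consists of one sentence invoking both lemmas.

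One minor point worth noting explicitly is that $\dim_{COG}(T)$ is well-defined: from the remark of Kratochv\'il and Tuza quoted in the introduction, $\dim_{\mathcal{A}}$ exists for every graph whenever $\mathcal{A}$ contains all complete graphs and all ``complete graph minus an edge'' graphs. The class of cographs clearly contains both families (both are $P_4$-free), so $\dim_{COG}(T)$ is finite to begin with, and the construction above certifies that it is at most $2$.
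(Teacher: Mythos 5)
Your proof is correct and is exactly the paper's argument: the theorem is stated there as an immediate consequence of combining Lemma~\ref{lem:compiscograph} and Lemma~\ref{lem:cographdimtree} after rooting $T$ arbitrarily. Nothing further is needed.
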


Since cographs are closed under the disjoint union operation, we can now deduce from Observation~\ref{obs:join} that the cograph dimension of every forest is at most 2.
\begin{Corollary}\label{cor:cographdimforests}
Let $F$ be a forest then $\dim_{COG}(F)\leq 2$.
\end{Corollary}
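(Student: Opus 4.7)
The plan is to deduce the corollary immediately from the preceding theorem and Observation~\ref{obs:join}. A forest $F$ is, by definition, a disjoint union of finitely many trees $T^1, T^2, \ldots, T^m$. The theorem just proved gives $\dim_{COG}(T^i) \leq 2$ for each $i$, and the Preliminaries record that the class of cographs is closed under the disjoint union operation. Observation~\ref{obs:join} therefore applies directly: iterating it over the components yields
\[
\dim_{COG}(F) \;=\; \dim_{COG}(T^1 \uplus T^2 \uplus \cdots \uplus T^m) \;\leq\; \max_{1\leq i\leq m}\dim_{COG}(T^i) \;\leq\; 2.
\]

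Alternatively, one can give an explicit construction that mirrors the tree case. Pick an arbitrary root in each component $T^i$, build the cographs $T^i_o$ and $T^i_e$ as in Lemma~\ref{lem:compiscograph}, and set $F_o = T^1_o \uplus T^2_o \uplus \cdots \uplus T^m_o$ and $F_e = T^1_e \uplus T^2_e \uplus \cdots \uplus T^m_e$. Each of $F_o$ and $F_e$ is a cograph since the class of cographs is closed under disjoint union, and $F = F_o \cap F_e$ follows because no edge of $F_o$ or $F_e$ runs between distinct components, so the intersection reduces componentwise to $T^i_o \cap T^i_e = T^i$ by Lemma~\ref{lem:cographdimtree}.

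There is no real obstacle: once the result has been established for a single tree, closure of the cograph class under disjoint union converts it for free into the statement for forests. The only minor point to note is that Observation~\ref{obs:join} is stated for two graphs, so a trivial induction on the number of components is needed to cover forests with more than two trees, but this is immediate since a disjoint union of cographs is a cograph.
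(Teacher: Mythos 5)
Your proof is correct and follows exactly the paper's argument: the corollary is deduced from the tree case via Observation~\ref{obs:join} and the closure of cographs under disjoint union. The explicit component-wise construction you add is a fine sanity check but not needed.
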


Clearly, there are trees, even paths, that are not cographs, and therefore this bound is tight. Note that when $T$ is a path, then we can choose one of its endpoints as the root so that the graphs $T_o$ and $T_e$ are split graphs. Since $T_o$ and $T_e$ are also cographs by Lemma~\ref{lem:compiscograph}, we get that each of them is a threshold graph. We thus have the following from Lemma~\ref{lem:cographdimtree}.

\begin{Corollary}\label{cor:thresholddimpath}
For every path $P$, $\dim_{TH}(P)\leq 2$.
\end{Corollary}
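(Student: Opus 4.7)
The plan is to piggyback on Lemmas~\ref{lem:compiscograph} and~\ref{lem:cographdimtree}, which already give the representation $P = T_o \cap T_e$ as the intersection of two cographs for any rooted tree. The extra ingredient I need is that when the tree is a path and we root it at one of its two endpoints, the auxiliary graphs $T_o$ and $T_e$ are additionally split graphs. Since a graph that is simultaneously a split graph and a cograph is a threshold graph (recalled in the Introduction), this promotes the two cographs to threshold graphs and immediately yields $\dim_{TH}(P)\leq 2$.

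Concretely, I would write $P$ as $v_0 v_1 v_2 \cdots v_n$ and take $v_0$ as the root. With this rooting, the parent of $v_i$ is $v_{i-1}$ for $i\geq 1$, and the descendants of $v_i$ are exactly the vertices $v_j$ with $j\geq i$. The odd vertices are the $v_i$ with $i$ odd, and the even vertices are the $v_i$ with $i$ even.

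Next, I would verify that $T_o$ is split with the odd vertices as a clique and the even vertices as an independent set. For two odd vertices $v_{2k+1}$ and $v_{2\ell+1}$ with $k<\ell$, we have $v_{2\ell+1}$ is a descendant of $p(v_{2k+1}) = v_{2k}$, so the edge is present in $T_o$ by definition. For two even vertices $v_{2k}$ and $v_{2\ell}$ with $k<\ell$, they are not consecutive in $P$ and every edge added in passing from $P$ to $T_o$ has at least one odd endpoint, so $v_{2k}v_{2\ell}\notin E(T_o)$. The same type of argument, with the roles of odd and even swapped, shows $T_e$ is split: any two even vertices $v_{2k},v_{2\ell}$ with $k<\ell$ are joined since $v_{2\ell}$ is a descendant of $p(v_{2k})$ (using $p(v_0)=v_0$ in the case $k=0$), and any two odd vertices are neither joined in $P$ nor by an added edge of $T_e$.

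Combining these observations with Lemma~\ref{lem:compiscograph} gives that $T_o$ and $T_e$ are both cographs and both split graphs, hence both threshold graphs. Lemma~\ref{lem:cographdimtree} then gives $P = T_o\cap T_e$, and thus $\dim_{TH}(P)\leq 2$. I do not expect any genuine obstacle here; the only step requiring care is the case analysis checking the split partition, which is routine once vertices are indexed linearly along the path.
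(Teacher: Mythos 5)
Your proposal is correct and follows exactly the paper's argument: root the path at an endpoint, observe that $T_o$ and $T_e$ are then split graphs as well as cographs (hence threshold graphs), and invoke Lemma~\ref{lem:cographdimtree}. You merely spell out the routine verification of the split partitions, which the paper leaves implicit.
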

\medskip

Next, we show that there are trees having threshold dimension at least 3.
Chv\'atal and Hammer~\cite{chvatalhammer} defined the auxiliary graph $G^*$ corresponding to a graph $G$ as follows: $V(G^*)=E(G)$ and two vertices $uv$ and $xy$ of $V(G^*)=E(G)$ are adjacent in $G^*$ if and only if $ux,vy\notin E(G)$. They asked whether $t(G)=\chi(G^*)$ for every graph $G$. Although Cozzens and Leibowitz~\cite{cozzens1984threshold} gave a negative answer to this question, 
Raschle and Simon~\cite{raschlesimon} proved the following theorem (a shorter proof was recently given in~\cite{dalu}).

\begin{Theorem}[Raschle-Simon]\label{thm:rashclesimon}
A graph $G$ has a threshold cover of size 2 if and only if $G^*$ is bipartite.
\end{Theorem}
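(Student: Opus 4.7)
I would handle the two directions separately; the forward implication is direct while the converse is the real content.

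For the forward direction, suppose $G = G_1 \cup G_2$ as edge sets with $G_1, G_2$ threshold. Colour each $e \in V(G^*) = E(G)$ with the index of some $G_i$ containing $e$, breaking ties arbitrarily. I claim this is a proper $2$-colouring of $G^*$. Let $e_1 = uv$ and $e_2 = xy$ be two vertex-disjoint edges adjacent in $G^*$, so that after an appropriate relabelling $ux, vy \notin E(G)$. If both edges receive colour $1$, then both lie in $E(G_1) \subseteq E(G)$, hence $ux, vy \notin E(G_1)$; examining $G_1[\{u,v,x,y\}]$ and distinguishing the four subcases based on whether $uy, vx \in E(G_1)$, one finds an induced $2K_2$, $P_4$, or $C_4$ in $G_1$, contradicting that $G_1$ is threshold. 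Hence the colouring is proper and $G^*$ is bipartite.

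For the converse, assume a bipartition $V(G^*) = A \sqcup B$ into independent sets. The aim is to construct threshold graphs $H_1, H_2 \subseteq G$ with $E(H_1) \cup E(H_2) = E(G)$. The naive choice $H_1 = (V(G), A)$, $H_2 = (V(G), B)$ can fail: a forbidden induced four-vertex subgraph in $H_1$ might have its missing edges lying in $B \subseteq E(G)$, so its two spanning $A$-edges need not be declared adjacent in $G^*$ by the definition, and the bipartition alone does not rule the configuration out. My plan is to greedily augment $H_1$ (and symmetrically $H_2$) by inserting additional edges from $E(G)$; since a cover need not be a partition, an edge of $B$ may legally be placed in $H_1$ as well. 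Whenever $H_1$ still contains a forbidden induced subgraph on some $\{u,v,x,y\}$, I would locate a non-edge of $H_1$ on these vertices that lies in $E(G)$ (necessarily in $B$) and add it to $H_1$, iterating until no forbidden configuration remains.

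The critical obstacle is consistency: inserting an edge $f$ to destroy one configuration may create a new forbidden induced subgraph in $H_1$, and the symmetric process on $H_2$ could conflict. To control this, I would exploit $G^*$-bipartiteness to restrict the possible shapes of forbidden subgraphs. The fact that the two spanning edges of a forbidden configuration in $H_1$ lie in $A$ and are therefore non-adjacent in $G^*$ forces particular four-vertex edge-patterns to exist in $G$, from which one can pick out a canonical destroyer edge and argue that consistent use of such destroyers does not cascade. Formalising this structural analysis is the heart of the argument and is where I would spend most of the effort. If the direct greedy analysis proves too intricate, I would fall back on induction on $|E(G)|$ by removing an edge $e$ chosen so that $(G - e)^*$ remains bipartite and an inductive threshold cover of $G - e$ extends to $G$ by reinserting $e$ into the appropriate $H_i$; the extension step reduces once more to the same local four-vertex analysis, which is ultimately unavoidable and is presumably where the shorter proof from \cite{dalu} gains its efficiency over the original case-by-case argument.
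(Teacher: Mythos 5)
The first thing to note is that the paper does not prove this theorem: it is quoted as a known result of Raschle and Simon (with a pointer to a recently published shorter proof) and is then used as a black box to show that the tree in Figure~1 has threshold dimension at least~3. So there is no in-paper argument to compare yours against; the question is only whether your proposal stands on its own.

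Your forward direction is correct and complete: if $uv$ and $xy$ are adjacent in $G^*$ and both lie in the same threshold graph $G_1$ with $E(G_1)\subseteq E(G)$, then $ux,vy\notin E(G_1)$, and the four subcases on $uy,vx$ produce an induced $2K_2$, $P_4$ or $C_4$ in $G_1$, which is impossible. The converse, however, is the entire content of the theorem, and your proposal does not prove it. You correctly observe that the naive assignment $H_1=(V(G),A)$, $H_2=(V(G),B)$ can fail, and you then propose either a greedy augmentation using ``canonical destroyer edges'' or an induction on $|E(G)|$ --- but you never exhibit the destroyer, never prove the no-cascade claim, and never show that an edge $e$ with $(G-e)^*$ bipartite and the required extension property exists. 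These are not routine details. The analogous statement for general $k$, namely Chv\'atal and Hammer's conjecture that $t(G)=\chi(G^*)$, is false (Cozzens and Leibowitz gave a counterexample), so any proof of the $k=2$ case must use bipartiteness of $G^*$ in an essential, global way rather than only through local four-vertex configurations; both known proofs require substantial additional structural machinery. As it stands, your proposal establishes only the easy direction and a plan, not a proof, for the hard one.
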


This theorem can be used to prove that the tree $T$ shown in Figure~\ref{fig:thtree} has threshold dimension at least 3. Consider the graph $\overline{T}^*$. Note that $(bj)(ci)(dj)(ei)(fh)(gd)(fc)(eb)(da)(hb)(ia)(jb)$ is an odd cycle in $\overline{T}^*$. By Theorem~\ref{thm:rashclesimon}, we then have that $t(\overline{T})>2$. This implies that $\dim_{TH}(T)>2$.

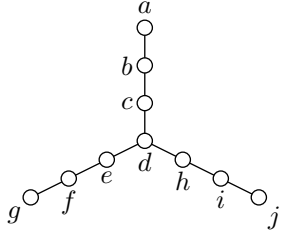
\begin{figure}
\begin{center}
\begin{tikzpicture}
\renewcommand{\vertexset}{(a,0,3),(b,0,2.5),(c,0,2),(d,0,1.5),(e,-.5,1.25),(f,-1,1),(g,-1.5,0.75),(h,.5,1.25),(i,1,1),(j,1.5,0.75)}
\renewcommand{\edgeset}{(a,b),(b,c),(c,d),(d,e),(e,f),(f,g),(d,h),(h,i),(i,j)}
\renewcommand{\defradius}{.1}
\drawgraph
\node[above=2] at (\xy{a}) {$a$};
\node[left=1] at (\xy{b}) {$b$};
\node[left=1] at (\xy{c}) {$c$};
\node[below=1] at (\xy{d}) {$d$};
\node[below=1] at (\xy{e}) {$e$};
\node[below=1] at (\xy{f}) {$f$};
\node[below left] at (\xy{g}) {$g$};
\node[below=1] at (\xy{h}) {$h$};
\node[below=1] at (\xy{i}) {$i$};
\node[below right] at (\xy{j}) {$j$};
\end{tikzpicture}
\caption{A tree that is not the intersection of two threshold graphs.}\label{fig:thtree}
\end{center}
\end{figure}

\section{Cycles}\label{sec:cycles}
We shall now turn our attention to cycles, and show that cycles on more than 6 vertices cannot be represented as the intersection of two cographs.

Let $C_n$ denote the cycle on $n$ vertices having $V(C_n)=\{v_1,v_2,\ldots,v_n\}$ and $E(C_n)=\{v_iv_{i+1}\colon 1\leq i\leq n-1\}\cup\{v_nv_1\}$. In the following, we shall let 
$v_{n+t}$ denote $v_t$ and $v_{1-t}$ denote $v_{n+1-t}$,
for $1\leq t\leq n$. We prove that for $n\ge 7$, there do not exist cographs $A$ 
and $B$ such that $C_n=A\cap B$. We will use the well-known fact that the diameter of any
induced subgraph of a cograph is at most 2; i.e. if $G'$ is an induced subgraph
of a cograph $G$, then there cannot be two vertices that are at a distance of 3
or more in $G'$ (since the shortest path between them in $G'$ will contain a $P_4$
that is an induced subgraph of $G'$, and therefore also an induced subgraph of $G$).

First we derive a lemma that must hold true for any cycle that is the intersection
of two cographs, provided that the cycle contains at least 6 vertices.

\begin{Lemma}\label{lem:3consecneighbour}
Let $A$ and $B$ be two cographs such that $C_n=A\cap B$ for some $n\geq 6$. 
Then there does not exist $i,j\in\{1,2,\ldots,n\}$ such that
$v_iv_j,v_iv_{j+1},v_iv_{j+2}\notin E(A)$.
\end{Lemma}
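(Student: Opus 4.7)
The plan is to assume for contradiction that $v_iv_j,v_iv_{j+1},v_iv_{j+2}\notin E(A)$ and derive a forbidden induced $P_4$ in $B$. First I observe that since these three pairs are non-edges of $C_n$, the vertex $v_i$ is distinct from each of $v_{j-1},v_j,v_{j+1},v_{j+2},v_{j+3}$, and for $n\ge 6$ these are six pairwise distinct vertices of $C_n$. Since $C_n\subseteq A$, the graph $A$ is connected, so the hint stated just before the lemma applies: every induced subgraph of $A$ that is connected has diameter at most $2$.

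The key intermediate step exploits the diameter bound on $A[\{v_i,v_{j-1},v_j,v_{j+1},v_{j+2}\}]$. If $v_iv_{j-1}\in E(A)$, this induced subgraph is connected (via $v_iv_{j-1}$ together with the cycle edges $v_{j-1}v_j,v_jv_{j+1},v_{j+1}v_{j+2}$), and the only neighbour of $v_i$ in the set is $v_{j-1}$; hence the distances from $v_i$ to each of $v_{j+1}$ and $v_{j+2}$ can be at most $2$ only if $v_{j-1}v_{j+1},v_{j-1}v_{j+2}\in E(A)$. By the symmetric analysis on $A[\{v_i,v_{j+1},v_{j+2},v_{j+3}\}]$, if $v_iv_{j+3}\in E(A)$ then $v_jv_{j+3},v_{j+1}v_{j+3}\in E(A)$. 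If instead $v_iv_{j-1}\notin E(A)$ (respectively $v_iv_{j+3}\notin E(A)$), then $v_i$ has a longer arc of consecutive non-neighbours in $A$; since $v_i$ is adjacent to its cycle-neighbours $v_{i\pm 1}$ in $A$, this arc has a well-defined boundary, and I would iterate the argument with the boundary vertex playing the role of $v_{j-1}$ or $v_{j+3}$.

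Having these chord constraints in $A$, I transfer to $B$. Each non-edge of $C_n$ lies in at most one of $E(A),E(B)$, so the chords $v_{j-1}v_{j+1},v_{j-1}v_{j+2},v_jv_{j+3},v_{j+1}v_{j+3}$ forced into $E(A)$ are automatically absent from $E(B)$. Now look at $B[\{v_{j-1},v_j,v_{j+1},v_{j+2}\}]$: the cycle edges $v_{j-1}v_j,v_jv_{j+1},v_{j+1}v_{j+2}$ are present while $v_{j-1}v_{j+1}$ and $v_{j-1}v_{j+2}$ are absent. If $v_jv_{j+2}\notin E(B)$ this is precisely the forbidden induced $P_4$ in $B$. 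In the remaining sub-case $v_jv_{j+2}\in E(B)$ (and hence $v_jv_{j+2}\notin E(A)$), I pass to $B[\{v_{j-1},v_j,v_{j+2},v_{j+3}\}]$ or $B[\{v_{j-1},v_{j+3},v_{j+2},v_{j+1}\}]$; depending on whether $v_{j-1}v_{j+3}$ lies in $E(A)$ or in $E(B)$, exactly one of these two induced subgraphs is the desired $P_4$, contradicting that $B$ is also a cograph. The main obstacle is the careful bookkeeping of which non-$C_n$ edge belongs to $E(A)$ versus $E(B)$, but the dichotomy that each such edge is forbidden from both simultaneously makes every potential escape route close off, forcing an induced $P_4$ to appear in $B$.
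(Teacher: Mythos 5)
Your argument for the special case in which both $v_{j-1}$ and $v_{j+3}$ are $A$-neighbours of $v_i$ is correct: the diameter-2 property of connected induced subgraphs of the cograph $A$ forces the chords $v_{j-1}v_{j+1}$, $v_{j-1}v_{j+2}$, $v_jv_{j+3}$, $v_{j+1}v_{j+3}$ into $E(A)$ and hence out of $E(B)$, and your case analysis on $v_jv_{j+2}$ and $v_{j-1}v_{j+3}$ then exhibits an induced $P_4$ in $B$ in every branch. (Two cosmetic points: the symmetric analysis must be run on $A[\{v_i,v_j,v_{j+1},v_{j+2},v_{j+3}\}]$ rather than the four-vertex set you wrote, since you also need $v_jv_{j+3}\in E(A)$; and the final dichotomy should be on whether $v_{j-1}v_{j+3}\in E(B)$ or not, since a non-edge of $C_n$ may lie in neither $A$ nor $B$ --- your two candidate paths do cover both alternatives.)

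The genuine gap is the reduction of the general case to this special one. When, say, $v_iv_{j-1}\notin E(A)$, you propose to ``iterate with the boundary vertex playing the role of $v_{j-1}$.'' Let $v_x$ be the last $A$-neighbour of $v_i$ before $v_j$ and $v_y$ the first after $v_{j+2}$. The diameter argument then forces $v_x$ to be $A$-adjacent to \emph{every} vertex of $v_{x+1},\ldots,v_{j+2}$; in particular $v_xv_j\in E(A)$, and since $v_xv_j$ is not a cycle edge when $x\neq j-1$, it is absent from $E(B)$. This destroys your first candidate $P_4$, whose spine $v_{j-1}v_jv_{j+1}v_{j+2}$ with $v_{j-1}$ replaced by $v_x$ is no longer a path in $B$. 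If instead you shift the whole window to the boundary and try $v_xv_{x+1}v_{x+2}v_{x+3}$, the later branches of your case analysis need the analogue of $v_jv_{j+3}\notin E(B)$, namely $v_{x+1}v_{x+4}\notin E(B)$; but that was supplied by the right-hand diameter argument only because $v_{j+3}$ was itself a boundary vertex, and $v_{x+4}$ need not equal $v_y$ when the run of consecutive $A$-non-neighbours of $v_i$ has length greater than $3$. So the case analysis does not close up, and the two boundaries cannot in general be brought within a five-vertex window. Some additional mechanism is required --- for instance an induction propagating $B$-chords from $v_{x+1}$ along the whole arc until it collides with the forbidden chord $v_{x+1}v_y$, or the paper's route of applying the diameter bound in $B$ to the entire set $\{v_x,\ldots,v_y\}$ to force $v_xv_y\in E(B)$ and then reading off the induced $P_4$ $v_xv_yv_{y-1}v_{y-2}$. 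As written, your proof establishes the lemma only when the maximal run of consecutive $A$-non-neighbours of $v_i$ containing $v_j,v_{j+1},v_{j+2}$ has length exactly $3$.
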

\begin{proof}
Suppose there exist $i,j\in\{1,2,\ldots,n\}$ such that
$v_iv_j,v_iv_{j+1},v_iv_{j+2}\notin E(A)$. Since $E(C_n)\subseteq E(A)$, we have
that $v_i \notin 
\{v_{j-1},v_j,v_{j+1},$ $v_{j+2},v_{j+3}\}$. Consider the graph $A$.
Let $v_x$ be the last neighbour of $v_i$ in the
sequence $v_{i+1},v_{i+2},\ldots,v_{j-1}$ and let $v_y$ be the first 
neighbour of $v_i$ in the sequence $v_{j+3},v_{j+4},\ldots,v_{i-1}$.
Clearly, the vertices $v_x,v_y,v_i$ are pairwise distinct.
We can see that 
$v_iv_xv_{x+1}\ldots v_jv_{j+1}v_{j+2}\ldots v_{y-1}v_yv_i$ 
forms a cycle with at least 6 vertices in the cograph $A$. Note that $v_x$ and 
$v_y$ are the only neighbours of $v_i$ in this cycle.
Consider the subgraph $A'$ induced in $A$ by the vertices in 
$\{v_{x+1},v_{x+2},\ldots,v_y,v_i\}$. Since $A'$ is clearly a connected induced
subgraph of the cograph $A$, there should be a path of length at most 
2 in $A'$ between $v_i$ and every other vertex in $A'$. As $v_y$ is the only 
neighbour of $v_i$ in $A'$, each vertex in $\{v_{x+1},v_{x+2},\ldots,v_{y-1}\}$ 
is adjacent to $v_y$ in $A'$ and therefore $v_yv_{x+1},v_yv_{x+2},\ldots,
v_yv_{y-1}\in E(A)$. By applying the same arguments on the subgraph induced in
$A$ by the vertices in $\{v_i,v_x,v_{x+1},\ldots,v_{y-1}\}$, we can show that
$v_xv_{x+1},v_xv_{x+2},\ldots,v_xv_{y-1}\in E(A)$. 

Now let us consider the graph $B$. As $C_n = A\cap B$, we can conclude from the
above observations that $v_xv_{x+2},v_xv_{x+3},\ldots,v_xv_{y-1},v_yv_{x+1},
v_yv_{x+2},\ldots,v_yv_{y-2}\notin E(B)$. Note that this means that there is no
vertex in $\{v_{x+1},v_{x+2},\ldots,v_{y-1}\}$ that is adjacent to both $v_x$
and $v_y$. Consider the subgraph $B'$ induced in $B$ by the vertices in $\{
v_x,v_{x+1},\ldots,v_y\}$. As $B'$ is a connected 
induced subgraph of the cograph $B$, there must be a path of distance at most 2 
between $v_x$ and $v_y$ in $B'$. Since there is no
vertex in $\{v_{x+1},v_{x+2},\ldots,v_y\}$ that is adjacent to both $v_x$ and
$v_y$, this implies that $v_xv_y\in E(B')$, and therefore $v_xv_y\in E(B)$.
But then, $v_xv_yv_{y-1}v_{y-2}$ is an induced $P_4$ in $B$, contradicting the
fact that $B$ is a cograph.
\end{proof}

\begin{Lemma}\label{lem:2consecneighbour}
Let $A$ and $B$ be two cographs such that $C_n=A\cap B$ for some $n\geq 7$. 
Then there does not exist $i,j\in\{1,2,\ldots,n\}$ such that $v_iv_j,v_iv_{j+1}
\notin E(A)$ and $v_i\notin\{v_{j-2},v_{j+3}\}$.
\end{Lemma}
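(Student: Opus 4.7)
The plan is to prove this by contradiction, extending the strategy of Lemma~\ref{lem:3consecneighbour} with a further case analysis that is needed because the gap of non-neighbours has length only 2. Suppose indices $i, j$ violating the statement exist.

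First, Lemma~\ref{lem:3consecneighbour} immediately gives $v_i v_{j-1}, v_i v_{j+2} \in E(A)$, since otherwise $v_i$ would have three consecutive non-neighbours ($\{v_{j-1}, v_j, v_{j+1}\}$ or $\{v_j, v_{j+1}, v_{j+2}\}$) in $A$. Next, since $A$ is a cograph and $v_i v_j, v_i v_{j+1} \notin E(A)$, the paths $v_i v_{j-1} v_j v_{j+1}$ and $v_j v_{j+1} v_{j+2} v_i$ cannot be induced $P_4$'s in $A$; inspecting the remaining candidate non-edges forces the chord edges $v_{j-1} v_{j+1} \in E(A)$ and $v_j v_{j+2} \in E(A)$. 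For $n \geq 7$ these are not cycle edges, so $v_{j-1} v_{j+1}, v_j v_{j+2} \notin E(B)$. Applying the cograph property to $B[\{v_{j-1}, v_j, v_{j+1}, v_{j+2}\}]$ then forces $v_{j-1} v_{j+2} \in E(B)$ (to destroy the induced $P_4$ formed by the three cycle edges), whence $v_{j-1} v_{j+2} \notin E(A)$.

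The next step is a case analysis on whether $v_i v_j, v_i v_{j+1}$ lie in $E(B)$, noting $v_i v_{j-1}, v_i v_{j+2} \notin E(B)$ (they are non-cycle $A$-edges). Three of the four configurations produce an induced $P_4$ in $B$ immediately: if both $v_i v_j, v_i v_{j+1} \in E(B)$, then $v_i v_j v_{j-1} v_{j+2}$ is an induced $P_4$ in $B$ using the non-edges $v_i v_{j-1}, v_i v_{j+2}, v_j v_{j+2}$; if only $v_i v_j \in E(B)$, then $v_i v_j v_{j+1} v_{j+2}$ works; and symmetrically if only $v_i v_{j+1} \in E(B)$.

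The main obstacle will be the remaining case $v_i v_j, v_i v_{j+1} \notin E(B)$, in which $v_i$ is isolated in $B$ from $\{v_{j-1}, v_j, v_{j+1}, v_{j+2}\}$. Here the hypothesis $v_i \neq v_{j-2}$ is used: since $v_{j-1} v_{j+1} \notin E(B)$, avoiding an induced $P_4$ on $B[\{v_{j-2}, v_{j-1}, v_j, v_{j+1}\}]$ forces $v_{j-2} v_j \in E(B)$ or $v_{j-2} v_{j+1} \in E(B)$. A further sub-case analysis then translates each such new $B$-edge into an $A$-non-edge and applies the cograph property of $A$ to 4-vertex subgraphs such as $A[\{v_{j-2}, v_{j-1}, v_j, v_{j+2}\}]$ or $A[\{v_{j-2}, v_{j-1}, v_{j+1}, v_{j+2}\}]$, where the already-established non-edges $v_{j-1} v_{j+2}$ and $v_{j-2} v_j$ (respectively $v_{j-2} v_{j+1}$) combine with a forced further non-edge $v_{j-2} v_{j+2}$ to exhibit the induced $P_4$ in $A$, giving the desired contradiction. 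The hypothesis $v_i \neq v_{j+3}$ is invoked symmetrically if needed.
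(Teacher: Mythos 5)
Your opening matches the paper's proof exactly: Lemma~\ref{lem:3consecneighbour} gives $v_iv_{j-1},v_iv_{j+2}\in E(A)$, the $P_4$-freeness of $A$ then forces $v_{j-1}v_{j+1},v_jv_{j+2}\in E(A)$, these four non-cycle edges are therefore absent from $B$, and $P_4$-freeness of $B$ forces $v_{j-1}v_{j+2}\in E(B)$. Your disposal of the sub-cases in which at least one of $v_iv_j,v_iv_{j+1}$ lies in $E(B)$ is also correct (the paper uses $v_iv_jv_{j-1}v_{j+2}$ and $v_iv_{j+1}v_{j+2}v_{j-1}$; your alternatives check out).

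The gap is in your ``main obstacle'' case, $v_iv_j,v_iv_{j+1}\notin E(B)$, and it is twofold. First, the sub-case analysis you sketch does not close: you claim that a non-edge $v_{j-2}v_{j+2}$ of $A$ is ``forced'' and combines with $v_{j-2}v_j\notin E(A)$ and $v_{j-1}v_{j+2}\notin E(A)$ to give an induced $P_4$ in $A$. Nothing forces that non-edge; on the contrary, applying the cograph property to $A[\{v_{j-2},v_{j-1},v_j,v_{j+2}\}]$ forces $v_{j-2}v_{j+2}\in E(A)$ precisely so as to kill the $P_4$ $v_{j-2}v_{j-1}v_jv_{j+2}$, and chasing the consequences further only produces induced $C_4$'s (which are cographs), not a contradiction. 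This is to be expected: the obstruction in this case is not local to the window $\{v_{j-2},\ldots,v_{j+3}\}$, so no bounded-window chord-chasing will finish it. Second, and more importantly, this case is vacuous for a one-line reason that you already have all the ingredients for: since $v_iv_{j-1}\in E(A)$ is not a cycle edge, $v_iv_{j-1}\notin E(B)$, so if also $v_iv_j,v_iv_{j+1}\notin E(B)$ then $v_{j-1},v_j,v_{j+1}$ are three consecutive non-neighbours of $v_i$ in $B$, contradicting Lemma~\ref{lem:3consecneighbour} applied with the roles of $A$ and $B$ exchanged. This is exactly what the paper does (``at least one of $v_iv_j,v_iv_{j+1}$ must be present in $E(B)$''). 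A smaller point: the hypotheses $v_i\neq v_{j-2}$ and $v_i\neq v_{j+3}$ are not reserved for the last case; they are needed at the outset to guarantee that $v_i$ is distinct from all of $v_{j-2},\ldots,v_{j+3}$, so that the various four-vertex sets you take induced subgraphs on really have four distinct vertices and the quoted chords are not cycle edges.
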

\begin{proof}
Let us assume that such $i$ and $j$ exist. Note that since $v_iv_j,v_iv_{j+1}
\notin E(A)\supseteq E(C_n)$ and $v_i\notin\{v_{j-2},v_{j+3}\}$, we can
conclude that $v_i\notin\{v_{j-2},v_{j-1},v_j,v_{j+1},v_{j+2},v_{j+3}\}$.
By Lemma~\ref{lem:3consecneighbour},
$v_iv_{j-1}$ and $v_iv_{j+2} \in E(A)$. Note that $v_jv_{j+2}\in 
E(A)$, as otherwise $v_iv_{j+2}v_{j+1}v_j$ will be an induced $P_4$ in $A$. 
Similarly, $v_{j-1}v_{j+1}\in E(A)$, as otherwise $v_iv_{j-1}v_jv_{j+1}$ will 
form an induced $P_4$ in $A$. As $C_n=A\cap B$, we now have that $v_iv_{j-1},
v_iv_{j+2},v_{j-1}v_{j+1},v_jv_{j+2}\notin E(B)$.
This tells us that $v_{j-1}v_{j+2}\in E(B)$, as otherwise, 
$v_{j-1}v_jv_{j+1}v_{j+2}$ will be an induced $P_4$ in $B$. Further, using
Lemma~\ref{lem:3consecneighbour} on $B$, we can infer that at least one of 
the edges $v_iv_j,v_iv_{j+1}$ must be present in $E(B)$.
If $v_iv_j\in E(B)$, then $v_iv_jv_{j-1}v_{j+2}$ will be an induced $P_4$ in 
$B$ 
and if $v_iv_{j+1}\in E(B)$, then $v_iv_{j+1}v_{j+2}v_{j-1}$ will be an induced 
$P_4$ in $B$. In either case, we have a contradiction to the fact that $B$ is a 
cograph.
\end{proof}

\begin{Corollary}\label{Cor:edgesincographs}
Let $C_n=A\cap B$, for some $n\geq 7$, where $A$ and $B$ are cographs, and
let $i\in\{1,2,\ldots,n\}$. Then:
\begin{itemize}
\item exactly one of the edges in $\{v_iv_{i+3},
v_iv_{i+4}\}$ is in $E(A)$ and the other is in $E(B)$, and
\item exactly one
of the edges in $\{v_iv_{i-3},v_iv_{i-4}\}$ is in $E(A)$ and the other is
in $E(B)$.
\end{itemize}
\end{Corollary}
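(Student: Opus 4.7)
The plan is to derive both bullet points as a direct consequence of Lemma~\ref{lem:2consecneighbour} combined with the fact that $C_n = A \cap B$, with the two bullets being symmetric to one another. The main observation is that for $n \geq 7$ both $v_iv_{i+3}$ and $v_iv_{i+4}$ fail to be edges of $C_n$: indeed, $v_{i+3}$ and $v_{i+4}$ are at distance $3$ and (at least) $3$ from $v_i$ on $C_n$, and the assumption $n \geq 7$ guarantees that these vertices are genuinely distinct from $v_{i-1}, v_i, v_{i+1}$, so none of the pairs collapses to an edge of the cycle. Consequently, neither edge lies in $E(A) \cap E(B)$, so at most one of $v_iv_{i+3}, v_iv_{i+4}$ can belong to $E(A)$ and simultaneously to $E(B)$; in particular, $\{v_iv_{i+3},v_iv_{i+4}\} \cap E(A)$ and $\{v_iv_{i+3},v_iv_{i+4}\} \cap E(B)$ are disjoint subsets of the two-element set $\{v_iv_{i+3},v_iv_{i+4}\}$.

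Next I would verify the lower-bound side: at least one of the two candidates lies in $E(A)$, and at least one lies in $E(B)$. For this I apply Lemma~\ref{lem:2consecneighbour} with $j = i+3$. The hypothesis of that lemma requires $v_i \notin \{v_{j-2}, v_{j+3}\} = \{v_{i+1}, v_{i+6}\}$, and for $n \geq 7$ neither equality holds (the only way $v_i = v_{i+6}$ is if $n \mid 6$, which is ruled out). Hence Lemma~\ref{lem:2consecneighbour} prevents both $v_iv_{i+3} \notin E(A)$ and $v_iv_{i+4} \notin E(A)$, so $\{v_iv_{i+3},v_iv_{i+4\}} \cap E(A)$ is nonempty; the same reasoning applied to the cograph $B$ yields that $\{v_iv_{i+3},v_iv_{i+4}\} \cap E(B)$ is nonempty.

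Combining these observations, the two nonempty subsets of $\{v_iv_{i+3},v_iv_{i+4}\}$ that correspond to $E(A)$ and $E(B)$ are disjoint inside a two-element set, so each has size exactly $1$ and together they partition $\{v_iv_{i+3},v_iv_{i+4}\}$. This is precisely the first bullet. The second bullet, about $\{v_iv_{i-3}, v_iv_{i-4}\}$, is entirely symmetric: one applies Lemma~\ref{lem:2consecneighbour} with $j = i-4$, checking that $v_i \notin \{v_{j-2}, v_{j+3}\} = \{v_{i-6}, v_{i-1}\}$ under $n \geq 7$, and argues identically.

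I do not expect any substantial obstacle here, since both inputs (Lemma~\ref{lem:3consecneighbour} is not even needed directly, only Lemma~\ref{lem:2consecneighbour}) are already established; the only point that deserves care is the bookkeeping of cyclic indices to confirm that the applicability condition $v_i \notin \{v_{j-2}, v_{j+3}\}$ of Lemma~\ref{lem:2consecneighbour} is indeed met for every $i$ when $n \geq 7$, which is the precise reason the corollary's hypothesis is $n \geq 7$ rather than $n \geq 6$.
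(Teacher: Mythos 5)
Your proof is correct and follows essentially the same route as the paper: apply Lemma~\ref{lem:2consecneighbour} to $A$ and to $B$ to get that each contains at least one of the two candidate edges, then use $E(A)\cap E(B)=E(C_n)$ (which contains neither) to conclude exactly one lies in each. Your explicit verification that $v_i\notin\{v_{j-2},v_{j+3}\}$ for $j=i+3$ (resp.\ $j=i-4$) when $n\geq 7$ is a welcome bit of bookkeeping that the paper leaves implicit.
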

\begin{proof}
By Lemma~\ref{lem:2consecneighbour}, at least one of $v_iv_{i+3},v_iv_{i+4}$ is
in $E(A)$. Similarly, by applying Lemma~\ref{lem:2consecneighbour} on $B$,
at least one of these two edges is also in $E(B)$.
Since $v_iv_{i+3},v_iv_{i+4}\notin E(C_n)= E(A)\cap E(B)$, we can
conclude that exactly one of them is in $A$ and the other in $B$. By symmetry,
we have that exactly one of the edges in $\{v_iv_{i-3},v_iv_{i-4}\}$ is in
$E(A)$ and the other is in $E(B)$.
\end{proof}
\begin{Lemma}\label{lem:edgesinA}
Let $C_n=A\cap B$, for some $n\geq 7$, where $A$ and $B$ are cographs. If for
some $i\in\{1,2,\ldots,n\}$, the edge $v_iv_{i+4}\in E(A)$, then
$v_{i+1}v_{i+5}$ is in $E(A)$.
\end{Lemma}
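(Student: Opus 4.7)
The proof should follow almost immediately from Corollary~\ref{Cor:edgesincographs}, which I plan to use twice. The plan is to argue by contradiction: suppose $v_{i+1}v_{i+5} \notin E(A)$, and then pin down where each of the three ``long'' edges incident to $v_{i+4}$ lives, producing two edges of $A$ that the corollary forbids from coexisting.

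Concretely, first I would apply Corollary~\ref{Cor:edgesincographs} (first bullet) to the index $i+1$: exactly one of $\{v_{i+1}v_{i+4},\, v_{i+1}v_{i+5}\}$ lies in $E(A)$ and the other in $E(B)$. Since $n \geq 7$ guarantees that neither of these is an edge of $C_n$, this is a legitimate application. Combining with the assumption that $v_{i+1}v_{i+5} \notin E(A)$ forces $v_{i+1}v_{i+4} \in E(A)$.

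Next I would apply Corollary~\ref{Cor:edgesincographs} (second bullet) at the index $i+4$: exactly one of $\{v_{i+4}v_{i+1},\, v_{i+4}v_i\}$ is in $E(A)$ and the other is in $E(B)$. But by hypothesis $v_iv_{i+4} \in E(A)$, and by the previous step $v_{i+1}v_{i+4} \in E(A)$ as well, giving two $A$-edges where the corollary allows only one. This contradiction completes the argument and yields $v_{i+1}v_{i+5} \in E(A)$.

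Given the short chain of inferences, there is really no significant obstacle; the only small points to be careful about are (i) verifying that the pairs of vertices considered are indeed non-adjacent in $C_n$ so that Corollary~\ref{Cor:edgesincographs} actually applies (which is automatic for $n\geq 7$), and (ii) correctly invoking the two separate bullets of the corollary — the first applied ``forward'' from $v_{i+1}$ and the second applied ``backward'' from $v_{i+4}$.
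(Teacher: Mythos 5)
Your proposal is correct and is essentially the paper's own argument: both apply the first bullet of Corollary~\ref{Cor:edgesincographs} at index $i+1$ and the second bullet at index $i+4$ (noting $v_{(i+4)-3}=v_{i+1}$ and $v_{(i+4)-4}=v_i$) to rule out $v_{i+1}v_{i+4}\in E(A)$. The only difference is cosmetic — you frame it as a contradiction from the outset while the paper does a direct case elimination.
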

\begin{proof}
Suppose that $v_iv_{i+4}\in E(A)$. From Corollary~\ref{Cor:edgesincographs}, exactly one of the edges in 
$\{v_{i+1}v_{i+4},v_{i+1}v_{i+5}\} \in E(A)$. If $v_{i+1}v_{i+4} 
\in E(A)$, then the vertex $v_{i+4}$ is adjacent to both the vertices 
$v_{(i+4)-3}$ and $v_{(i+4)-4}$ in $A$, which contradicts 
Corollary~\ref{Cor:edgesincographs}. Thus, $v_{i+1}v_{i+4} \notin E(A)$, which 
by 
Corollary~\ref{Cor:edgesincographs} means that $v_{i+1}v_{i+5}\in E(A)$.
\end{proof}

\begin{Theorem}\label{Thm:cographdimcycle}
There does not exist two cographs $A$ and $B$ such that $A \cap B$ = $C_n$, when 
$n \geq 7$. In other words, $\dim_{COG}(C_n)>2$, when $n\geq 7$.
\end{Theorem}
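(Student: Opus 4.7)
I plan to assume for contradiction that $C_n = A \cap B$ for cographs $A, B$ with $n \ge 7$, and to propagate the constraints already in Lemma~\ref{lem:edgesinA} and Corollary~\ref{Cor:edgesincographs} until one of $A, B$ is forced to contain an induced $P_4$. By Corollary~\ref{Cor:edgesincographs} applied at $i = 1$, at least one of $v_1 v_4,\,v_1 v_5$ lies in $E(A)$; swapping the labels $A \leftrightarrow B$ if necessary, I may assume $v_1 v_5 \in E(A)$. Iterating Lemma~\ref{lem:edgesinA} then gives $v_j v_{j+4} \in E(A)$ for every $j$, and the Corollary places $v_j v_{j+3} \in E(B) \setminus E(A)$ for every $j$. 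The case $n = 7$ falls out at once: iteration reaches $v_4 v_{4+4} = v_4 v_1$, so $v_1 v_4 \in E(A)$, which directly contradicts the Corollary at $i = 1$ (which, together with $v_1 v_5 \in E(A)$, had already excluded $v_1 v_4$ from $E(A)$).

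For $n \ge 8$ I would first show $v_j v_{j+2} \in E(A)$ for every $j$ by inspecting $A[\{v_i, v_{i+1}, v_{i+3}, v_{i+4}\}]$: five of its six pairs are already pinned (the two cycle edges and the ``$+4$'' chord are in $A$, the two ``$+3$'' chords are not), so were $v_{i+1} v_{i+3}$ also missing, the path $v_{i+1} v_i v_{i+4} v_{i+3}$ would be an induced $P_4$ in $A$. The same induced-$P_4$ trick applied to $\{v_j, v_{j+2}, v_{j+3}, v_{j+5}\}$ then forces $v_j v_{j+5} \in E(A)$: the edges $v_j v_{j+2}, v_{j+2} v_{j+3}, v_{j+3} v_{j+5}$ are in $A$ by the previous step and the cycle, while $v_j v_{j+3}$ and $v_{j+2} v_{j+5}$ are excluded as ``$+3$'' chords. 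This step settles $n = 8$ by itself: in $C_8$ the pair $v_j v_{j+5}$ coincides with the ``$+3$'' chord $v_{j+5} v_{(j+5)+3} = v_{j+5} v_j$, so it cannot lie in $A$, and we obtain the forbidden induced $P_4$ in $A$ directly.

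For $n \ge 9$ the finishing blow is delivered inside $B$, on the four-vertex set $\{v_1, v_2, v_5, v_6\}$: I would verify that $v_1\text{-}v_2\text{-}v_5\text{-}v_6$ is an induced $P_4$ there. The three edges $v_1 v_2, v_5 v_6$ (cycle) and $v_2 v_5$ (a ``$+3$'' chord) lie in $E(B)$; the two pairs $v_1 v_5, v_2 v_6$ are ``$+4$'' chords, so they belong to $E(A) \setminus E(B)$. The only delicate non-edge is $v_1 v_6$: for $n = 9$ its cycle-distance is $4$ (the backward distance beats the forward one), so it is already a ``$+4$'' non-edge of $B$; for $n \ge 10$ it is either a ``$+5$'' chord or, when $n = 10$, the diameter, and the previous step has placed every such pair into $A \setminus B$. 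Hence $v_1 v_6 \notin E(B)$, the induced $P_4$ is genuine, and $B$ is not a cograph, a contradiction.

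The main obstacle is the bookkeeping created by small-$n$ coincidences of chord classes: in $C_7$ the ``$+3$'' and ``$+4$'' chord classes literally coincide as edges, and in $C_8$ the ``$+5$'' class collapses onto the ``$+3$'' class. Each such collision is what actually delivers the contradiction in the corresponding special case, but at every propagation step one must check that the four indices used are distinct modulo $n$ and that the labels ``$+k$'' are being interpreted as the appropriate cycle-distances for the current $n$.
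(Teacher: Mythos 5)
Your proof is correct and follows essentially the same route as the paper: the same initial propagation forcing every $v_iv_{i+4}$ into $E(A)$, every $v_iv_{i+3}$ into $E(B)\setminus E(A)$ and every $v_iv_{i+2}$ into $E(A)$, the same direct contradiction for $n=7$, and the same critical quadruple $\{v_j,v_{j+2},v_{j+3},v_{j+5}\}$ (the paper's $v_1v_3v_4v_6$). The only divergence is the endgame for $n\geq 8$: the paper reapplies Lemma~\ref{lem:2consecneighbour} to $B$ to conclude $v_1v_6\notin E(A)$ and exhibits the induced $P_4$ inside $A$, whereas you force $v_jv_{j+5}$ into $E(A)$ and finish with a chord-class collision for $n=8$ and an induced $P_4$ in $B$ on $\{v_1,v_2,v_5,v_6\}$ for $n\geq 9$ --- both variants are sound.
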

\begin{proof}
Suppose such cographs $A$ and $B$ exist. From
Corollary~\ref{Cor:edgesincographs}, we may assume 
without loss of generality that $v_1v_5\in E(A)$. Then by applying
Lemma~\ref{lem:edgesinA} repeatedly, we can conclude that for every $i\in\{1,2,
\ldots,n\}$, $v_iv_{i+4}\in E(A)$. By Corollary~\ref{Cor:edgesincographs}, this
implies that for every $i\in\{1,2,\ldots,n\}$, $v_iv_{i+3}\notin E(A)$.
If $v_iv_{i+2}\notin E(A)$ for some $i\in\{1,2,\ldots,n\}$, 
then $v_iv_{i-1}v_{i+3}v_{i+2}$ will form an induced $P_4$ in $A$, 
contradicting 
the fact that $A$ is a cograph. Therefore, we can conclude that for every 
$i\in\{1,2,\ldots,n\}$, $v_iv_{i+2}\in E(A)$. Let us consider the path 
$v_1v_3v_4v_6$ in $A$. Since $v_1v_5\in E(A)$, we have $v_1v_5\notin E(B)$. If
$n\geq 8$, then we can apply Lemma~\ref{lem:2consecneighbour} on $B$ to get
that $v_1v_6\in E(B)$, which implies that $v_1v_6\notin E(A)$. Thus, if $n\geq
8$, we have the induced $P_4$ $v_1v_3v_4v_6$ in $A$, which contradicts the fact
that $A$ is a cograph. We can therefore conclude that $n=7$. Then since $v_iv_{i+4}
\in E(A)$ for each $i\in\{1,2,\ldots,n\}$, we have $v_1v_5,v_4v_1\in E(A)$. This
contradicts Corollary~\ref{Cor:edgesincographs}.
\end{proof}

Figure~\ref{fig:smallcycles} shows that every cycle on less than 7 vertices is the
intersection of at most two cographs (note that cycles on less than 5 vertices are
themselves cographs).

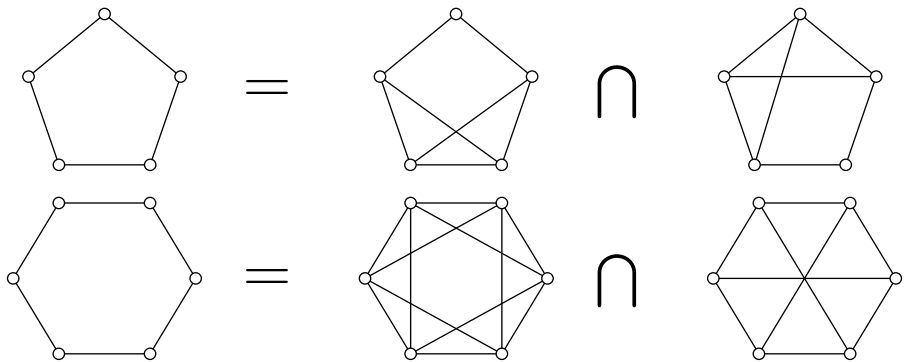
\begin{figure}[h]
\begin{center}
\begin{tabular}{ccccc}
\renewcommand{\vertexset}{(x1,0.9,0),(x2,0.5,1.17),(x3,1.5,2),(x4,2.5,1.17),(x5,2.1,0)}
\renewcommand{\edgeset}{(x1,x2),(x2,x3),(x3,x4),(x4,x5),(x5,x1)}
\renewcommand{\defradius}{.075}
\begin{tikzpicture}
\drawgraph
\end{tikzpicture}
&
\begin{tikzpicture}[baseline=-30]
\node [thick,font=\fontsize{60}{0}\selectfont] at (0,0) {$=$};
\end{tikzpicture}
&
\renewcommand{\vertexset}{(x1,0.9,0),(x2,0.5,1.17),(x3,1.5,2),(x4,2.5,1.17),(x5,2.1,0)}
\renewcommand{\edgeset}{(x1,x2),(x2,x3),(x3,x4),(x4,x5),(x5,x1),(x1,x4),(x2,x5)}
\renewcommand{\defradius}{.075}
\begin{tikzpicture}
\drawgraph
\end{tikzpicture}
&
\begin{tikzpicture}[baseline=-30]
\node [font=\fontsize{20}{0}] at (0,0) {$\bigcap$};
\end{tikzpicture}
&
\renewcommand{\vertexset}{(x1,0.9,0),(x2,0.5,1.17),(x3,1.5,2),(x4,2.5,1.17),(x5,2.1,0)}
\renewcommand{\edgeset}{(x1,x2),(x2,x3),(x3,x4),(x4,x5),(x5,x1),(x1,x3),(x2,x4)}
\renewcommand{\defradius}{.075}
\begin{tikzpicture}
\drawgraph
\end{tikzpicture}
\medskip\\
\renewcommand{\vertexset}{(x1,1,0),(x2,0.4,1),(x3,1,2),(x4,2.2,2),(x5,2.8,1),(x6,2.2,0)}
\renewcommand{\edgeset}{(x1,x2),(x2,x3),(x3,x4),(x4,x5),(x5,x6),(x6,x1)}
\renewcommand{\defradius}{.075}
\begin{tikzpicture}
\drawgraph
\end{tikzpicture}
&
\begin{tikzpicture}[baseline=-30]
\node [thick,font=\fontsize{60}{0}\selectfont] at (0,0) {$=$};
\end{tikzpicture}
&
\renewcommand{\vertexset}{(x1,1,0),(x2,0.4,1),(x3,1,2),(x4,2.2,2),(x5,2.8,1),(x6,2.2,0)}
\renewcommand{\edgeset}{(x1,x2),(x2,x3),(x3,x4),(x4,x5),(x5,x6),(x6,x1),(x1,x3),(x2,x4),(x3,x5),(x4,x6),(x5,x1),(x6,x2)}
\renewcommand{\defradius}{.075}
\begin{tikzpicture}
\drawgraph
\end{tikzpicture}
&
\begin{tikzpicture}[baseline=-30]
\node [font=\fontsize{20}{0}] at (0,0) {$\bigcap$};
\end{tikzpicture}
&
\renewcommand{\vertexset}{(x1,1,0),(x2,0.4,1),(x3,1,2),(x4,2.2,2),(x5,2.8,1),(x6,2.2,0)}
\renewcommand{\edgeset}{(x1,x2),(x2,x3),(x3,x4),(x4,x5),(x5,x6),(x6,x1),(x1,x4),(x2,x5),(x3,x6)}
\renewcommand{\defradius}{.075}
\begin{tikzpicture}
\drawgraph
\end{tikzpicture}
\end{tabular}
\end{center}
\caption{Cycles with less than 7 vertices are the intersection of at most two cographs.}
\label{fig:smallcycles}
\end{figure}

\begin{Remark}\label{rem:dimcographleq6}
$\dim_{COG}(C_n)\leq 2$, when $n\leq 6$.	
\end{Remark}

It is easy to see that any cycle $C$ is the intersection of at most three threshold graphs as follows. Let $v$ be any vertex on the cycle. Then $C-v$ is a path and therefore, by Corollary~\ref{cor:thresholddimpath}, there exist two threshold graphs $G_1$ and $G_2$ such that $G_1\cap G_2=C-v$. Let $G'_1$ and $G'_2$ be obtained from $G_1$ and $G_2$ respectively by adding $v$ as a universal vertex. Since the graph obtained by adding a universal vertex to any threshold graph is again a threshold graph, $G'_1$ and $G'_2$ are threshold graphs. Let $G_3$ be the graph obtained from $C$ by making every pair of vertices from $V(C)\setminus\{v\}$ adjacent to each other (thus, $V(C)\setminus\{v\}$ is a clique in $G_3$). It is easy to verify that $G_3$ is a threshold graph and that $C=G'_1\cap G'_2\cap G_3$. Therefore, we have the following result.

\begin{Remark}\label{rem:dimcographcycle}
For any cycle $C$, $\dim_{COG}(C)\leq\dim_{TH}(C)\leq 3$.
\end{Remark}

It is easy to see that a $C_4$ is a cograph but not a threshold graph. However, it can be represented as the intersection of 2 threshold graphs as shown in Figure~\ref{fig:c4threshold}.
The graph $C_5$ can be seen to have threshold dimension at least 3 as follows. Suppose for the sake of contradiction that $\dim_{TH}(C_5)\leq 2$. Since $\overline{C_5}$ is isomorphic to $C_5$, we have $t(C_5)=t(\overline{C_5})\leq 2$, which means that the edges of a $C_5$ can be covered by at most two threshold graphs. Let $H_1$ and $H_2$ be two threshold graphs that cover the edges of a $C_5$. If $H_i$, for some $i\in\{1,2\}$, contains three edges, then it contains either an induced $P_4$ or an induced $2K_2$, which is a contradiction to the fact that $H_i$ is a threshold graph. So each of $H_1$ and $H_2$ can cover at most two edges of the $C_5$, which contradicts the fact that every edge of the $C_5$ is contained in at least one of $H_1$ or $H_2$.

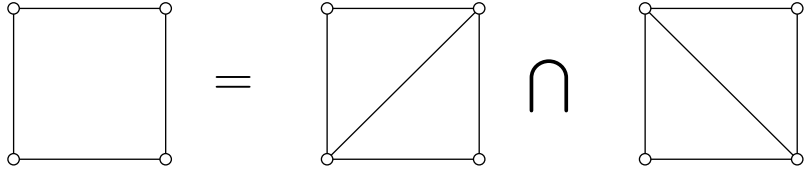
\begin{figure}[h]
\begin{center}
\begin{tabular}{ccccc}
\renewcommand{\vertexset}{(x1,0,0),(x2,0,2),(x3,2,2),(x4,2,0)}
\renewcommand{\edgeset}{(x1,x2),(x2,x3),(x3,x4),(x1,x4)}
\renewcommand{\defradius}{.075}
\begin{tikzpicture}
\drawgraph
\end{tikzpicture}
&
\begin{tikzpicture}[baseline=-30]
\node [font=\fontsize{20}{0}] at (0,0) {$=$};
\end{tikzpicture}
&
\renewcommand{\vertexset}{(x1,0,0),(x2,0,2),(x3,2,2),(x4,2,0)}
\renewcommand{\edgeset}{(x1,x2),(x2,x3),(x3,x4),(x1,x4),(x1,x3)}
\renewcommand{\defradius}{.075}
\begin{tikzpicture}
\drawgraph
\end{tikzpicture}
&
\begin{tikzpicture}[baseline=-30]
\node [font=\fontsize{20}{0}] at (0,0) {$\bigcap$};
\end{tikzpicture}
&
\renewcommand{\vertexset}{(x1,0,0),(x2,0,2),(x3,2,2),(x4,2,0)}
\renewcommand{\edgeset}{(x1,x2),(x2,x3),(x3,x4),(x1,x4),(x2,x4)}
\renewcommand{\defradius}{.075}
\begin{tikzpicture}
\drawgraph
\end{tikzpicture}
\end{tabular}
\end{center}
\caption{The threshold dimension of $C_4$ is 2.}
\label{fig:c4threshold}
\end{figure}

We can see that the threshold dimension of a $C_6$ is also at least 3 as follows. Suppose for the sake of contradiction that $\dim_{TH}(C_6)\leq 2$. Consider the cycle $C=v_1v_2\ldots v_6v_1$ on 6 vertices. Let $G=\overline{C}$. From our assumption, we have $t(G)\leq 2$. Thus there exist two threshold graphs $H_1$ and $H_2$ such that $E(G)=E(H_1)\cup E(H_2)$. Now consider the edges $v_1v_4$, $v_2v_5$ and $v_3v_6$ in $E(G)$. If any two of them belong to $H_i$, for some $i\in\{1,2\}$, then $H_i$ would contain an induced $2K_2$, $P_4$ or $C_4$, which is a contradiction (to see this, note that for any two of these edges, their endpoints induce a $C_4$ in $G$). Therefore at least one of the edges $v_1v_4,v_2v_5,v_3v_6$ is not contained in either $H_1$ or $H_2$, contradicting the fact that $E(G)=E(H_1)\cup E(H_2)$.

Note that we could also have used Theorem~\ref{thm:rashclesimon} to show that $C_5$ and $C_6$ are not the intersection of two threshold graphs. We thus get the values shown in Table~\ref{tab:cycledim} for the cograph dimension and threshold dimension of cycles of every size.

\begin{table}[h]
\centering
 \begin{tabular}{|c|c|c|}
 \hline
 $n$ & $\dim_{COG}(C_n)$ & $\dim_{TH}(C_n)$ \\
 \hline
 3 & 1 & 1 \\
 \hline
 4 & 1 & 2 \\
 \hline
 5 & 2 & 3 \\
 \hline
 6 & 2 & 3 \\
 \hline
 $\geq 7$ & 3 & 3\\
 \hline
 \end{tabular}
 \caption{Cograph and threshold dimensions of cycles}
 \label{tab:cycledim}
 \end{table}

\section{Threshold dimension, boxicity and chromatic number}\label{sec:threshboxichro}

A $k$-dimensional box, or \emph{$k$-box} for short, is defined as the Cartesian product of $k$ closed intervals in $\mathbb{R}$. The \emph{boxicity} of a graph $G$, denoted by $\boxi(G)$, is the minimum integer $k$ for which there is an assignment of $k$-boxes to the vertices of $G$ such that for $u,v\in V(G)$, $uv\in E(G)$ if and only if the $k$-boxes corresponding to $u$ and $v$ intersect. It is known that for any graph $G$, $\boxi(G)=\dim_{INT}(G)$, where $INT$ denotes the class of interval graphs~\cite{cozzens1989dimensional}. The \emph{chromatic number} of a graph $G$, denoted as $\chi(G)$, is the minimum number of colours required to colour the vertices of $G$ such that no two adjacent vertices receive the same colour.


We shall now prove the main result of this section.

\begin{Theorem}\label{thm:boxi}
	Let $G$ be any graph. Then, $\dim_{TH}(G)\leq \chi(G)\boxi(G)$.
\end{Theorem}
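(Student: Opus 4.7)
The plan is to combine a proper $\chi(G)$-coloring of $G$ with a box representation of dimension $\boxi(G)$ in order to exhibit $\chi(G)\cdot\boxi(G)$ threshold graphs whose intersection is exactly $G$. Fix a proper coloring $V(G)=V_1\cup\cdots\cup V_k$ with $k=\chi(G)$, and fix interval graphs $I_1,\ldots,I_m$ with $m=\boxi(G)$ such that $G=I_1\cap\cdots\cap I_m$; let each $I_j$ be specified by intervals $[l_v^{(j)},r_v^{(j)}]$ for $v\in V(G)$.

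For each pair $(i,j)\in\{1,\ldots,k\}\times\{1,\ldots,m\}$, I would define $T_{i,j}$ on vertex set $V(G)$ by declaring $V_i$ to be an independent set, $V(G)\setminus V_i$ to be a clique, and, for $u\in V_i$ and $v\notin V_i$, putting $uv\in E(T_{i,j})$ if and only if $l_v^{(j)}\leq r_u^{(j)}$. The deliberate asymmetry here (using right endpoints on the $V_i$ side and left endpoints on the other side) is what restores the full ``intervals intersect'' relation when one later intersects $T_{i,j}$ with $T_{i',j}$ for the color class $i'$ containing $v$.

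I would then verify that each $T_{i,j}$ is a threshold graph. By construction it is a split graph, hence contains no induced $2K_2$ or $C_4$. Ordering the vertices of $V_i$ by nondecreasing $r_u^{(j)}$ makes the bipartite neighborhoods $\{v\notin V_i : l_v^{(j)}\leq r_u^{(j)}\}$ linearly ordered by inclusion, and a brief case analysis on how a hypothetical induced $P_4$ could distribute across the split partition shows that such nested neighborhoods preclude any induced $P_4$; thus $T_{i,j}$ has no induced $P_4$, $2K_2$, or $C_4$, and is a threshold graph.

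Finally, I would verify that $G=\bigcap_{i,j}T_{i,j}$. For an edge $uv\in E(G)$ with $u\in V_i$ and $v\in V_{i'}$ (necessarily $i\neq i'$, since the coloring is proper), the edge lies inside the clique part of $T_{i'',j}$ whenever $i''\notin\{i,i'\}$; in $T_{i,j}$ and $T_{i',j}$, the hypothesis $uv\in E(I_j)$ gives $l_v^{(j)}\leq r_u^{(j)}$ and $l_u^{(j)}\leq r_v^{(j)}$ respectively, so the bipartite condition holds. For a non-edge $uv\notin E(G)$: if $u,v$ lie in the same color class $V_i$ then $uv\notin E(T_{i,j})$ for every $j$; otherwise $u\in V_i, v\in V_{i'}$ with $i\neq i'$, and for some $j$ the intervals $[l_u^{(j)},r_u^{(j)}]$ and $[l_v^{(j)},r_v^{(j)}]$ are disjoint, so either $r_u^{(j)}<l_v^{(j)}$, removing $uv$ from $T_{i,j}$, or $r_v^{(j)}<l_u^{(j)}$, removing $uv$ from $T_{i',j}$. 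The main obstacle is the threshold-graph verification, and the key insight is that a single endpoint condition per pair $(i,j)$ suffices precisely because the symmetric counterpart is supplied free of charge by the ``twin'' graph $T_{i',j}$.
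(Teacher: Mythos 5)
Your construction is exactly the one in the paper: the same split graphs $T_{i,j}$ (there called $G_{ij}$) defined by the one-sided condition $l_v^{(j)}\leq r_u^{(j)}$ between a color class and its complement, with the same intersection argument. The only cosmetic difference is in verifying thresholdness — the paper derives a contradiction from a hypothetical induced $P_4$ directly via the endpoint inequalities, while you invoke the (correct) nested-neighborhood characterization of threshold split graphs — so the proposal is correct and essentially identical to the paper's proof.
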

\begin{proof}
	Let $\boxi(G)=k$. Then there exists a collection of $k$-boxes $\{B_u\}_{u\in V(G)}$ such that for $u,v\in V(G)$, $uv\in E(G)\Leftrightarrow B_u\cap B_v\neq\emptyset$. For $u\in V(G)$, let $B_u=[l_1(u),r_1(u)]\times [l_2(u),r_2(u)]\times\cdots\times [l_k(u),r_k(u)]$. Note that this means that $uv\notin E(G)$ if and only if $\exists j\in\{1,2,\ldots,k\}$ such that $r_j(u)<l_j(v)$ or $r_j(v)<l_j(u)$.
	
	Let $c:V(G)\rightarrow\{1,2,\ldots,\chi(G)\}$ be a proper vertex colouring of $G$.
	
	Now, for each $i\in\{1,2,\ldots,\chi(G)\}$ and $j\in\{1,2,\ldots,k\}$, we construct a graph $G_{ij}$ on vertex set $V(G)$ as follows.
	Define $E(G_{ij})=\{uv~|~c(u)\neq i,c(v)\neq i\}\cup\{uv~|~c(u)=i,c(v)\neq i$ and $r_j(u)\geq l_j(v)\}$. Note that each $G_{ij}$ is a split graph (the vertices with colour $i$ form an independent set and all other vertices form a clique). We shall now show that $$G=\bigcap_{\substack{1\leq i\leq\chi(G)\\1\leq j\leq k}} G_{ij}$$
	If $uv\in E(G)$, then clearly $c(u)\neq c(v)$ and for each $j\in\{1,2,\ldots,k\}$, we have $r_j(u)\geq l_j(v)$ and $r_j(v)\geq l_j(u)$. It follows that each $G_{ij}$ is a supergraph of $G$. So we only need to show that for each $u,v\in V(G)$ such that $uv\notin E(G)$, there exists $i\in\{1,2,\ldots,\chi(G)\}$ and $j\in\{1,2,\ldots,k\}$ such that $uv\notin E(G_{ij})$. Consider $uv\notin E(G)$ (where $u\neq v$). If $c(u)=c(v)=i$, then clearly, $uv\notin E(G_{ij})$, for all $1\leq j\leq k$. So let us assume that $c(u)\neq c(v)$. As $uv\notin E(G)$, we can assume without loss of generality that there exists $j\in\{1,2,\ldots,k\}$ such that $r_j(u)<l_j(v)$. Let $c(u)=i$, which means that $c(v)\neq i$. It now follows from the definition of $G_{ij}$ that $uv\notin E(G_{ij})$.

	To complete the proof, we shall show that each $G_{ij}$ is a cograph, which would imply that it is also a threshold graph, as every split graph that is also a cograph is a threshold graph~\cite{Brandstadt:1999:GCS:302970}. In particular, we show that no $G_{ij}$ contains an induced $P_4$. Suppose for the sake of contradiction that there exists $i\in\{1,2,\ldots,\chi(G)\}$ and $j\in\{1,2,\ldots,k\}$ such that $G_{ij}$ contains an induced path $wxyz$. As $\{u\in V(G_{ij})~|~c(u)=i\}$ forms an independent set in $G_{ij}$ and $\{u\in V(G_{ij})~|~c(u)\neq i\}$ forms a clique in $G_{ij}$, we can conclude that $c(w)=c(z)=i$, $c(x)\neq i$ and $c(y)\neq i$. Since $wx\in E(G_{ij})$ and $wy\notin E(G_{ij})$, we get $l_j(y)>r_j(w)\geq l_j(x)$. Similarly, since $yz\in E(G_{ij})$ and $xz\notin E(G_{ij})$, we get $l_j(x)>r_j(z)\geq l_j(y)$. We now have both $l_j(y)>l_j(x)$ and $l_j(x)>l_j(y)$, which is a contradiction.
\end{proof}

Since threshold graphs form a subclass of cographs, we have the following corollary.

\begin{Corollary}
Let $G$ be any graph. Then, $\dim_{COG}(G)\leq \dim_{TH}(G)\leq \chi(G)\boxi(G)$.
\end{Corollary}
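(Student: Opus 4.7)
The plan is to derive both inequalities as immediate consequences of results already established. The upper bound $\dim_{TH}(G)\leq\chi(G)\boxi(G)$ is precisely the statement of Theorem~\ref{thm:boxi}, so nothing further is needed for the second inequality.

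For the first inequality $\dim_{COG}(G)\leq\dim_{TH}(G)$, I would appeal to the containment $TH\subseteq COG$ noted in the preliminaries: threshold graphs are defined as those graphs that are simultaneously split graphs and cographs, so in particular every threshold graph is a cograph. Concretely, set $k=\dim_{TH}(G)$ and pick threshold graphs $H_1,\ldots,H_k$ whose intersection is $G$. Since each $H_i$ is also a cograph, $(H_1,\ldots,H_k)$ is a family of $k$ cographs with $G=\bigcap_{i=1}^{k} H_i$, so by definition $\dim_{COG}(G)\leq k$.

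There is no real obstacle here; the corollary is purely a combination of the containment of graph classes with the theorem just proved. Chaining the two inequalities gives $\dim_{COG}(G)\leq\dim_{TH}(G)\leq\chi(G)\boxi(G)$, as desired.
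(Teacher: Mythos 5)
Your proposal is correct and matches the paper exactly: the paper states the corollary immediately after Theorem~\ref{thm:boxi} with the one-line justification that threshold graphs form a subclass of cographs, which is precisely your argument that any representation by $k$ threshold graphs is also a representation by $k$ cographs.
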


A \emph{grid intersection graph} is a graph whose vertices can be mapped to horizontal and vertical line segments in the plane in such a way that two vertices are adjacent in the graph if and only if the line segments corresponding to them intersect and no two parallel line segments intersect. Clearly, every grid intersection graph has boxicity at most 2, since horizontal and vertical line segments are just degenerate 2-boxes. We shall use the following theorem of Hartman et al.~\cite{hartman1991grid}.

\begin{Theorem}[Hartman et al.]\label{thm:Hartman}
Every planar bipartite graph is a grid intersection graph.
\end{Theorem}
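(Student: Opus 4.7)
The plan is to build, for any planar bipartite graph $G$ with bipartition $V(G)=A\uplus B$, an explicit representation in which each vertex of $A$ is mapped to a horizontal segment and each vertex of $B$ to a vertical segment, with crossings corresponding to edges and parallel segments pairwise disjoint. I would construct the representation incrementally, adding one vertex at a time according to a canonical ordering adapted to a fixed planar embedding.

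First I would handle the easy reductions. If $G$ is disconnected, place grid intersection representations of its components in disjoint axis-aligned bounding boxes. If $G$ has a cut vertex $v$, recursively represent each block of $G$ containing $v$, arrange these representations in disjoint rectangular regions of the plane, and glue by extending the segment assigned to $v$ across all of them (this can be done without producing new crossings because the representations of distinct blocks live in disjoint strips). So assume $G$ is $2$-connected. Fix a planar embedding of $G$ with a distinguished outer face, together with the bipartition.

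Second, I would use a de Fraysseix--Pach--Pollack style canonical ordering $v_1,v_2,\ldots,v_n$ of $V(G)$, chosen so that at every step $i\geq 3$, $v_i$ lies on the outer face of the induced embedding on $\{v_1,\ldots,v_i\}$, and the set $N(v_i)\cap\{v_1,\ldots,v_{i-1}\}$ forms a contiguous arc on the outer face of the induced embedding on $\{v_1,\ldots,v_{i-1}\}$. I would then build the representation inductively while maintaining a staircase invariant: after step $i$, the boundary of the already-placed segments, read from bottom-left to upper-right, consists exactly of the pieces of the segments assigned to the vertices currently lying on the outer face of the embedding restricted to $\{v_1,\ldots,v_i\}$, appearing in the order in which those vertices occur on that face. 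When adding $v_i\in A$, I would place a fresh horizontal segment at a new $y$-coordinate just above the current staircase, with $x$-range spanning exactly the vertical segments of the (necessarily $B$-vertex) earlier neighbors of $v_i$ on the boundary; the case $v_i\in B$ is symmetric.

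The main obstacle is verifying the staircase invariant through each insertion. The contiguity of earlier neighbors on the outer face guarantees that their coordinates in the relevant axis form an interval, so the new segment can be made to cross exactly the intended set, and bipartiteness guarantees that all earlier neighbors of $v_i$ are segments of the opposite orientation. The delicate point is ensuring that the canonical ordering can be chosen so that each inserted vertex has the opposite orientation to its boundary neighbors at the moment of insertion, and that no parallel segments inadvertently share an endpoint; this may require inserting several vertices simultaneously (an ``addition path'' of same-colored vertices together with their newly exposed neighbors on the boundary), plus a perturbation argument at the end to separate parallel segments. Once the canonical ordering and its bipartite-aware refinement are in place, the correctness of the construction, and the fact that each produced segment has boxicity at most $2$, follow by a routine induction.
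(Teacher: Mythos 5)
The paper does not prove this statement at all: it is imported verbatim from Hartman, Newman and Ziv~\cite{hartman1991grid}, so there is no in-paper argument to compare yours against. Judged on its own, your sketch follows a recognizable and viable route (it is close in spirit to the later proof of de Fraysseix, Ossona de Mendez and Pach, which also builds the segment representation incrementally from a shelling order of a planar embedding), but as written it has genuine gaps at exactly the points you flag as ``delicate'' or ``routine.'' First, the de Fraysseix--Pach--Pollack canonical ordering is defined for triangulations (or $3$-connected planar graphs); for a $2$-connected planar \emph{bipartite} graph you need a Kant-style ordering in which whole chains of vertices are added at once, and you must prove that such an ordering exists and interacts correctly with the bipartition. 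You acknowledge this but do not supply it, and it is the heart of the proof, not a refinement. Second, the invariant maintenance is incomplete: when you insert a horizontal segment for $v_i\in A$ spanning the $x$-range of its earlier neighbours, you must argue that \emph{every} vertical segment whose $x$-coordinate lies in that range and which is still exposed on the staircase belongs to $N(v_i)$ --- contiguity of $N(v_i)$ on the outer face gives you that the neighbours are consecutive among the outer-face $B$-vertices, but you still have to rule out exposed non-neighbouring vertical segments protruding into the new segment's $y$-level, and to show the new segment can be terminated without overlapping the parallel (horizontal) boundary segments it passes over.

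Third, the cut-vertex reduction is not sound as stated. If $v$ is horizontal and has neighbours in two blocks whose representations occupy disjoint strips, the extension of $v$'s segment from one strip to the other must exit the first block's bounding box; along the way it may cross vertical segments of that block that are not neighbours of $v$ (nothing forces $v$'s segment to reach the boundary of its block's bounding region at a height free of other vertical segments). This can be repaired, but not by the one-line assertion given. In short, the skeleton is right, but the three load-bearing steps --- existence of the bipartite-compatible ordering, the no-spurious-crossing verification, and the block gluing --- are each asserted rather than proved, and this is a theorem whose original published proof itself required later correction, so these are not safe places to wave hands.
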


From Theorem~\ref{thm:Hartman} and Theorem~\ref{thm:boxi}, we now have the following corollary.

\begin{Corollary}\label{cor:plbipthdim}
For any planar bipartite graph $G$, $\dim_{COG}(G)\leq \dim_{TH}(G)\leq 4$.
\end{Corollary}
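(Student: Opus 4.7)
The plan is to combine the two results just established. The key observation is that a planar bipartite graph $G$ is, by definition, $2$-colourable, so $\chi(G)\le 2$. Moreover, by Theorem~\ref{thm:Hartman}, $G$ admits a grid intersection representation by horizontal and vertical line segments in the plane. Viewing each such segment as a degenerate axis-parallel rectangle (that is, a $2$-box one of whose coordinate intervals has equal endpoints) turns this grid intersection representation into a representation of $G$ as the intersection graph of axis-parallel rectangles in $\mathbb{R}^2$. Hence $\boxi(G)\le 2$.

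With these two bounds in hand, I would apply Theorem~\ref{thm:boxi} to obtain $\dim_{TH}(G)\le \chi(G)\cdot\boxi(G)\le 2\cdot 2 = 4$. The remaining inequality $\dim_{COG}(G)\le \dim_{TH}(G)$ is immediate because every threshold graph is a cograph, so any representation of $G$ as an intersection of threshold graphs is automatically a representation of $G$ as an intersection of cographs.

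There is essentially no obstacle: both non-trivial ingredients have just been set up, and the only point to double-check is that a grid intersection representation may be used as a $2$-box representation. This is true because the standard definition of boxicity permits degenerate closed intervals, so horizontal and vertical segments qualify as legitimate $2$-boxes.
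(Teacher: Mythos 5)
Your proposal is correct and follows exactly the paper's route: bipartiteness gives $\chi(G)\le 2$, Theorem~\ref{thm:Hartman} gives a grid intersection representation and hence $\boxi(G)\le 2$ (the paper likewise notes that axis-parallel segments are degenerate $2$-boxes), and Theorem~\ref{thm:boxi} then yields $\dim_{TH}(G)\le 4$, with $\dim_{COG}(G)\le\dim_{TH}(G)$ because threshold graphs are cographs. No differences worth noting.
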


Note that the above corollary is a strengthening of Theorem~3.10 of~\cite{kratochvil1994intersection} which states that for every planar bipartite graph $G$, $\dim_{PER}(G)\leq 4$, where $PER$ denotes the class of permutation graphs (note that cographs, and therefore threshold graphs, form a subclass of permutation graphs).
Using the Four Colour Theorem~\cite{fct} and the fact that the boxicity of any planar graph is at most 3~\cite{thomassen1986interval}, Theorem~\ref{thm:boxi} also implies the following strengthening of Corollary~3.11 of~\cite{kratochvil1994intersection} which states that $\dim_{PER}(G)\leq 12$ for every planar graph $G$.

\begin{Corollary}\label{cor:thresholdplanar}
For any planar graph $G$, $\dim_{TH}(G)\leq 12$.
\end{Corollary}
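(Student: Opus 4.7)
The plan is essentially a one-line combination of Theorem~\ref{thm:boxi} with two classical results about planar graphs, so the proof proposal is almost forced. First I would invoke Theorem~\ref{thm:boxi}, which gives the general inequality $\dim_{TH}(G)\leq\chi(G)\cdot\boxi(G)$. The task then reduces to bounding $\chi(G)$ and $\boxi(G)$ separately for a planar graph $G$.

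For the chromatic factor, I would cite the Four Colour Theorem~\cite{fct} to get $\chi(G)\leq 4$. For the boxicity factor, I would cite Thomassen's theorem~\cite{thomassen1986interval}, which states that every planar graph has boxicity at most $3$. Multiplying the two bounds gives $\dim_{TH}(G)\leq 4\cdot 3=12$, as claimed.

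There is no real obstacle here: both external theorems are well known and directly applicable, and Theorem~\ref{thm:boxi} has already been proved in full in this section, so no additional combinatorial work is needed. The only thing worth remarking on is that, since both the Four Colour Theorem and the boxicity bound for planar graphs are tight up to small constants (at least in the sense that each inequality cannot be drastically improved within its own parameter), this is the best bound that the $\chi\cdot\boxi$ technique can yield for the full class of planar graphs; better bounds for subclasses (as the paper pursues later) will come from either smaller chromatic number (e.g.\ bipartite planar graphs) or smaller boxicity (e.g.\ outerplanar graphs).
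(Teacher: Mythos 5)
Your proposal is correct and matches the paper's argument exactly: the corollary is obtained by combining Theorem~\ref{thm:boxi} with the Four Colour Theorem ($\chi(G)\leq 4$) and Thomassen's bound $\boxi(G)\leq 3$ for planar graphs, giving $4\cdot 3=12$. Nothing further is needed.
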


Since every triangle-free planar graph has a 3-colouring (Gr\"otzsch's Theorem), we also get the following.
\begin{Corollary}\label{cor:thresholdtrianglefreeplanar}
For any triangle-free planar graph $G$, $\dim_{TH}(G)\leq 9$.
\end{Corollary}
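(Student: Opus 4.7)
The plan is to derive this as an immediate consequence of Theorem~\ref{thm:boxi}, just as Corollary~\ref{cor:thresholdplanar} was obtained. All that is needed is to combine two well-known bounds: one on the chromatic number and one on the boxicity of a triangle-free planar graph.

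First, I would invoke Gr\"otzsch's Theorem, which guarantees that every triangle-free planar graph $G$ satisfies $\chi(G)\leq 3$. Second, since triangle-free planar graphs form a subclass of planar graphs, Thomassen's bound~\cite{thomassen1986interval} gives $\boxi(G)\leq 3$. Plugging both bounds into Theorem~\ref{thm:boxi}, we immediately obtain
\[
\dim_{TH}(G)\leq \chi(G)\cdot\boxi(G)\leq 3\cdot 3=9.
\]

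There is essentially no obstacle here: the proof is a one-line calculation once Theorem~\ref{thm:boxi} is in hand. The only (minor) point worth noting is that Thomassen's theorem applies to the whole class of planar graphs and we do not need a sharper boxicity bound tailored to the triangle-free case; if such a sharper bound were available, it would of course improve the corollary, but the value $9$ already follows from the generic planar boxicity bound combined with Gr\"otzsch's Theorem.
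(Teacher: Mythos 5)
Your proof is correct and is exactly the argument the paper uses: Gr\"otzsch's Theorem gives $\chi(G)\leq 3$, Thomassen's bound gives $\boxi(G)\leq 3$ for any planar graph, and Theorem~\ref{thm:boxi} yields $\dim_{TH}(G)\leq 3\cdot 3=9$. No issues.
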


Since for any outerplanar graph $G$, we have $\chi(G)\leq 3$ (folklore) and
$\boxi(G)\leq 2$~\cite{scheinerman}, we get the following result as a corollary of Theorem~\ref{thm:boxi}. 

\begin{Corollary}\label{cor:thresholdouterplanar}
For any outerplanar graph $G$, $\dim_{TH}(G)\leq 6$.
\end{Corollary}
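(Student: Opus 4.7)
The plan is to obtain this as an immediate application of Theorem~\ref{thm:boxi}, which gives the bound $\dim_{TH}(G)\leq\chi(G)\cdot\boxi(G)$ for every graph $G$. Thus it suffices to plug in upper bounds of $3$ on $\chi(G)$ and $2$ on $\boxi(G)$ whenever $G$ is outerplanar, so that the product is $6$.

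For the chromatic bound, I would invoke the well-known fact that every outerplanar graph is $3$-colourable; this is folklore and can be seen, for example, by induction on the number of vertices using the fact that every outerplanar graph has a vertex of degree at most $2$. For the boxicity bound, I would cite Scheinerman's result~\cite{scheinerman} that every outerplanar graph has boxicity at most $2$ (equivalently, every outerplanar graph is the intersection of two interval graphs). Since both bounds are already used in the statement of the corollary and are referenced in the paper, no further work is needed beyond invoking them.

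I do not anticipate any obstacle in this argument: the only two ingredients are the cited chromatic and boxicity bounds for outerplanar graphs, and the main theorem of the section. If one wished to make the proof fully self-contained, the only nontrivial step that would require effort is reproving $\boxi(G)\leq 2$ for outerplanar graphs, but since this is stated as a cited theorem, it can be used as a black box. The entire proof therefore reduces to the one-line computation $\dim_{TH}(G)\leq\chi(G)\cdot\boxi(G)\leq 3\cdot 2=6$ by Theorem~\ref{thm:boxi}.
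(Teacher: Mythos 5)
Your proposal is correct and matches the paper's argument exactly: the corollary is derived by combining Theorem~\ref{thm:boxi} with the folklore bound $\chi(G)\leq 3$ and Scheinerman's bound $\boxi(G)\leq 2$ for outerplanar graphs, giving $\dim_{TH}(G)\leq 3\cdot 2=6$.
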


Partial 2-trees are also 3-colourable, but there exist partial 2-trees that have boxicity~3~\cite{chandranseries}. Thus we have the following.

\begin{Corollary}\label{cor:thresholdp2tree}
For any partial 2-tree $G$, $\dim_{TH}(G)\leq 9$.
\end{Corollary}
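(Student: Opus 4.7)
The plan is to invoke Theorem~\ref{thm:boxi} directly, combined with the two standard facts about partial 2-trees that the surrounding text has already cited: namely that every partial 2-tree is 3-colourable, and that every partial 2-tree has boxicity at most 3 (with the upper bound on boxicity being tight, as shown in~\cite{chandranseries}). So the proof is essentially a one-line chain of inequalities, with no new combinatorial input required.

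First I would verify the chromatic number bound. Partial 2-trees are exactly the graphs of treewidth at most 2, and any graph of treewidth $k$ is $(k+1)$-colourable by greedily colouring vertices in the reverse of an elimination ordering witnessing the treewidth; each vertex has at most $k$ earlier neighbours, so $k+1$ colours suffice. Applied with $k=2$, this gives $\chi(G)\leq 3$ for any partial 2-tree $G$. Alternatively, one can argue inductively: a partial 2-tree is either $K_3$, an edge, a vertex, or built by adding a vertex of degree at most 2 to a smaller partial 2-tree, and 3 colours always suffice under this construction.

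Next I would invoke the boxicity bound $\boxi(G)\leq 3$ from~\cite{chandranseries}, which the paragraph preceding the corollary already quotes. Combining these two bounds with Theorem~\ref{thm:boxi} gives
\[
\dim_{TH}(G)\;\leq\;\chi(G)\cdot\boxi(G)\;\leq\;3\cdot 3\;=\;9,
\]
completing the proof. There is no real obstacle here; the content is entirely concentrated in Theorem~\ref{thm:boxi} and in the cited boxicity bound for partial 2-trees, and this corollary merely records what falls out of plugging in the chromatic number and boxicity of this graph class. The only thing worth mentioning for completeness is that, unlike the outerplanar case where $\boxi\leq 2$ yields the better bound of 6, one genuinely needs $\boxi=3$ here because~\cite{chandranseries} exhibits partial 2-trees attaining boxicity 3, so the $3\cdot 3=9$ factor cannot be improved by this method alone.
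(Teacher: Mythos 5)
Your proposal is correct and matches the paper's own (implicit) argument exactly: the corollary is stated as an immediate consequence of Theorem~\ref{thm:boxi} together with the facts that partial 2-trees are 3-colourable and have boxicity at most 3, giving $\dim_{TH}(G)\leq\chi(G)\cdot\boxi(G)\leq 3\cdot 3=9$. Nothing further is needed.
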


Adiga, Bhowmick and Chandran~\cite{adiga} showed that the boxicity of any graph with maximum degree $\Delta$ is $O(\Delta\log^2\Delta)$. This means, by Theorem~\ref{thm:boxi} and the fact that $\chi(G)\leq\Delta+1$, that $\dim_{TH}(G)$ is $O(\Delta^2\log^2\Delta)$ for any graph $G$ with maximum degree $\Delta$. 
\section{Upper bounds using treewidth and pathwidth}\label{sec:cogdimtwd}
A tree decomposition of a graph $G$ is a pair $(T,f)$, where $f:V(T)\rightarrow 2^{V(G)}$ is an assignment of subsets of $V(G)$ to the vertices of a tree $T$, satisfying the following properties:
\begin{enumerate}
\item\label{prop:conn} For every vertex $u\in V(G)$, the subgraph induced in $T$ by the set $\{x\in V(T)\colon u\in f(x)\}$ is connected (note that this implies that every vertex $u\in V(G)$ is contained in $f(x)$ for some $x\in V(T)$). Observe that this means that if $u\in f(x)\cap f(y)$, for some $x,y\in V(T)$, then for every $z\in V(T)$ that lies on the path between $x$ and $y$ in $T$, $u\in f(z)$.
\item\label{prop:edge} For every edge $uv\in E(G)$, there exists $x\in V(T)$ such that $u,v\in f(x)$.
\end{enumerate}
The width of a tree decomposition $(T,f)$ of a graph $G$ is defined as $\max_{x\in V(T)}\{|f(x)|-1\}$. The \emph{treewidth} of $G$, denoted by $\tw(G)$, the minimum width of a tree decomposition of $G$.
\medskip

Chandran and Sivadasan~\cite{chandran2007boxicity} showed that for any graph $G$, $\boxi(G)\leq\tw(G)+2$. Combining this with Theorem~\ref{thm:boxi} and the fact that $\chi(G)\leq\tw(G)+1$ (folklore), we get that for every graph $G$, $\dim_{TH}(G)\leq\chi(G)(\tw(G)+2)\leq(\tw(G)+1)(\tw(G)+2)$.
As shown in~\cite{aravindcrs}, the fact that the star chromatic number of a graph $G$ is at most $\frac{(\tw(G)+1)(\tw(G)+2)}{2}$~\cite{fertin2004star} can be combined with Theorem~\ref{thm:star} from Section~\ref{sec:vertpart} to deduce that $\dim_{COG}(G)\leq\frac{(\tw(G)+1)(\tw(G)+2)}{2}$ for any graph $G$. In this section we show that this upper bound can be improved.

We would like to note here that though the general strategy used in the lemma below is similar to that in the proof of Theorem~14 in~\cite{chandran2007boxicity}, the details in the two proofs are very different.
\begin{Lemma}\label{lem:twpw}
Let $(T,f)$ be a tree decomposition of a graph $G$ having width $k-1$. Then there exist cographs $G_0,G_1,\ldots,G_k$ such that $G=G_0\cap G_1\cap\cdots\cap G_k$.
\end{Lemma}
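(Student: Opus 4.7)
My approach is to produce $k+1$ cographs by exploiting the tree decomposition in two complementary ways: $k$ cographs indexed by ``slots'' (one per slot), together with one extra cograph capturing the tree's geometry. First, I would root $T$ at an arbitrary vertex $r$, and for each $v\in V(G)$ write $T_v=\{x\in V(T)\colon v\in f(x)\}$ (a subtree of $T$ by Property~\ref{prop:conn} of the decomposition) and $x_v$ for the vertex of $T_v$ closest to $r$. Then I would process the bags in BFS order from $r$ and assign each vertex a \emph{slot} $s(v)\in\{1,2,\ldots,k\}$ so that any two vertices sharing a bag receive distinct slots: in each new bag the vertices inherited from the parent retain their slots, and each newly introduced vertex receives one of the (at most $k$) unused slots. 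This is always possible because every bag has size at most $k$.

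Next, for each slot $i\in\{1,\ldots,k\}$, I would build a cograph $G_i$ in which the slot-$i$ vertices form an independent set, the remaining vertices form a clique, and the edges from slot-$i$ to non-slot-$i$ vertices are chosen by a tree-based rule forcing the neighborhoods of the slot-$i$ vertices (viewed inside the non-slot-$i$ clique) to be nested under inclusion. The chain ordering would come from a DFS traversal of $T$, ordering a slot-$i$ vertex $u$ by the position of its top bag $x_u$. Since a split graph whose independent-side neighborhoods form an inclusion chain is exactly a threshold graph, each $G_i$ is threshold, hence a cograph; and every edge of $G$ incident to a slot-$i$ vertex is automatically placed in $G_i$ because its two endpoints share a bag.

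Finally, I would construct one extra cograph $G_0$ from the rooted tree $T$ itself, converted into a cotree by alternating join and disjoint-union operations down its levels, with each vertex $v$ placed at the node $x_v$. The purpose of $G_0$ is to capture those non-edges $uv$ of $G$ with $s(u)\neq s(v)$ whose top bags $x_u,x_v$ sit in ``parallel'' (incomparable) subtrees of $T$ and which the slot cographs might have been forced to include for the sake of nesting: in the cotree such $u,v$ fall on opposite sides of a disjoint-union node. To conclude, I would verify $G=\bigcap_{i=0}^{k} G_i$ by showing both inclusions: $G\subseteq G_i$ for every $i$ (edges of $G$ live inside bags, so are present in each $G_i$), and every non-edge $uv$ of $G$ is a non-edge of some $G_i$, case-splitting on whether $s(u)=s(v)$ (then $G_{s(u)}$ witnesses it automatically) or $s(u)\neq s(v)$ (then either a slot cograph witnesses it, or else $x_u,x_v$ are incomparable and $G_0$ separates them).

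The principal obstacle is designing the tree-based rule in the slot cographs carefully enough that (i) the resulting split graphs are genuinely $P_4$-free, and (ii) jointly with $G_0$ they still witness every non-edge of $G$. A naive ``share a bag with $u$'' rule for the neighborhood $N_i(u)$ fails already on modest examples: in the path decomposition of $C_7$ with bags $\{1,2,3\},\{1,3,4\},\ldots,\{1,6,7\}$, vertices $2,3,7,6$ induce a $P_4$ in the putative slot-$1$ cograph because $2,3$ and $6,7$ share bags while the other required non-edges hold. Refining the rule to use the DFS ordering of top bags produces a nested chain that avoids this failure, but then extra edges are forced into the $G_i$'s and one must show that $G_0$ recovers exactly those non-edges. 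Verifying that the cotree construction yields a genuine cograph $G_0$, and that it precisely complements the slot cographs, is the crux of the argument.
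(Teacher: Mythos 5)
Your overall architecture (a colouring/slot assignment with one graph per colour, plus one extra graph $G_0$ built from the ancestor--descendant structure of the topmost bags) is genuinely close to the paper's, but the specific design of the slot graphs $G_i$ has a gap that I do not think can be repaired within your framework. You require each $G_i$ to be a \emph{threshold} graph: slot-$i$ vertices independent, the rest a clique, and the neighbourhoods of slot-$i$ vertices nested. Now take two slot-$i$ vertices $u$ and $u'$ whose top bags $x_u$ and $x_{u'}$ are incomparable in $T$. The non-edges of $G$ that $G_i$ is supposed to witness for $u$ are pairs $uw$ with $x_w$ a descendant of $x_u$ (for any other non-edge $uw$ the pair could in principle be an edge of $G_i$), so the ``required non-neighbourhood'' of $u$ on the clique side lives inside the subtree below $x_u$, and likewise for $u'$ below $x_{u'}$. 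These two subtrees are disjoint. Since you can only \emph{add} edges to force nesting, making one non-neighbourhood contain the other forces the smaller one to become empty: say $u'$ becomes complete to the clique side. Then every non-edge $u'w$ of $G$ with $x_{u'}$ an ancestor of $x_w$ and $s(w)\neq i$ is lost. No other slot graph witnesses it ($G_{s(w)}$ only excludes pairs $wz$ with $x_w$ an ancestor of $x_z$, and here the ancestry goes the other way), and your $G_0$ cannot witness it either, because $x_{u'}$ and $x_w$ are \emph{comparable} and a position-based $G_0$ that contains all edges of $G$ must contain all comparable pairs (it cannot distinguish $u'w\notin E(G)$ from an edge $u'w_1\in E(G)$ with $x_{w_1}$ in the same position). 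So the dichotomy in your final verification --- ``either a slot cograph witnesses it, or $x_u,x_v$ are incomparable and $G_0$ separates them'' --- is false, and the intersection comes out too large.

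The paper escapes exactly this trap by \emph{not} making the colour classes independent in $G_i$. It defines a forest partial order $R$ on $V(G)$ ($uRv$ iff the top bag of $u$ is an ancestor of that of $v$, with a tiebreak), takes $G_0$ to be the comparability graph of $R$ (a cograph because $R$ is a forest order --- essentially your $G_0$), and sets $E(G_i)=E(G)\cup\{uv\colon u,v\mbox{ incomparable}\}\cup\{uv\colon uRv,\ c(u)\neq i\}$. Thus two colour-$i$ vertices with incomparable top bags are \emph{adjacent} in $G_i$ --- which is precisely the edge whose absence creates the $P_4$ $w\,u\,u'\,w'$ in your construction --- and $P_4$-freeness is proved directly from the properties of the forest order rather than via thresholdness. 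The $G_i$ are split (hence threshold) graphs only in the pathwidth case, where $R$ is a total order and the incomparability issue disappears; that is how the paper gets $\dim_{TH}(G)\leq\pw(G)+1$ but only $\dim_{COG}(G)\leq\tw(G)+2$. If your plan worked as stated it would come close to giving a linear bound on threshold dimension in terms of treewidth, which the paper poses as an open question --- a useful sanity check that something must break.
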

\begin{proof}
As $(T,f)$ is a tree decomposition of $G$ having width $k-1$, we have that for each $x\in V(T)$, $|f(x)|\leq k$.

Let $G'$ be the graph with $V(G')=V(G)$ and $E(G')=\{uv\colon\exists x\in V(T)$ such that $u,v\in f(x)\}$. Note that $G'$ is a supergraph of $G$ and also that $G'$ is a chordal graph (folklore; to see this, note that the intersection graph of subtrees of a tree is a chordal graph~\cite{gavril}). Note that $(T,f)$ is a tree decomposition of $G'$ as well. Since $G'$ is a chordal graph, $\chi(G')=\omega(G')=\tw(G')+1$~\cite{golumbic,graphminors2}. Therefore, there exists a proper vertex colouring $c:V(G')\rightarrow\{1,2,\ldots,k\}$ of $G'$. It is easy to see that $c$ is proper vertex colouring of $G$ too. In particular, we have the stronger property that if $u,v\in V(G)$ such that there exists $x\in V(T)$ having $u,v\in f(x)$, then $c(u)\neq c(v)$.

Choose an arbitrary vertex $r\in V(T)$ as the root of $T$ and define the ancestor-descendant relation among the vertices of $T$ in the usual way (i.e., for $x,y\in V(T)$, $x$ is an ancestor of $y$ if and only if $x$ lies on the path between $r$ and $y$ in $T$).
For every vertex $u\in V(G)$, define $h(u)$ to be that vertex in $f^{-1}(u)=\{x\in V(T)\colon u\in f(x)\}$ that is an ancestor of every other vertex in $f^{-1}(u)$ (it is easy to see, using property~\ref{prop:conn} of tree decompositions, that there is always exactly one such vertex). We shall now construct a binary relation $R$ on $V(G)$ as follows. Let $<$ be an arbitrarily chosen linear ordering of the vertices in $V(G)$. For two vertices $u,v\in V(G)$ such that $h(u)\neq h(v)$, we have $uRv$ if and only if $h(u)$ is an ancestor of $h(v)$ in $T$. If $h(u)=h(v)$, then we have $uRv$ if and only if $u<v$. Note that $R$ is a partial order on $V(G)$.

We now construct the $k+1$ cographs $G_0,G_1,\ldots,G_k$ such that $G=\bigcap_{i=0}^k G_i$ as follows.

Define $V(G_0)=V(G)$ and $E(G_0)=\{uv\colon uRv\}$. By property~\ref{prop:edge} of tree decompositions, we have that for any edge $uv\in E(G)$, one of $h(u)$ or $h(v)$ is an ancestor of the other (this includes the case $h(u)=h(v)$). Therefore, $G_0$ is a supergraph of $G$.

For each $i\in\{1,2,\ldots,k\}$, define $V(G_i)=V(G)$ and $E(G_i)=E(G)\cup\{uv\colon (u,v),(v,u)\notin R\}\cup\{uv\colon uRv$ and $c(u)\neq i\}$. Clearly, $G_i$ is a supergraph of $G$.

First, we shall show that $G=\bigcap_{i=0}^k G_i$. For this, we only need to show that for any $uv\notin E(G)$, there exists $i\in\{0,1,\ldots,k\}$ such that $uv\notin E(G_i)$. Let $uv\notin E(G)$. If $(u,v),(v,u)\notin R$, then $uv\notin E(G_0)$. So let us assume without loss of generality that $uRv$. Then from the definition of $G_{c(u)}$, it is clear that $uv\notin E(G_{c(u)})$.

It only remains to be proven that each $G_i$, $0\leq i\leq k$, is a cograph.

Suppose for the sake of contradiction that $G_0$ is not a cograph. Then there exists an induced path $P=abcd$ in $G_0$. From the definition of $G_0$, $uv\in E(G_0)$ if and only if either $uRv$ or $vRu$. Let us orient the edges of $P$ such that an edge $uv$ of $P$ gets oriented from $u$ to $v$ if and only if $uRv$. Let $\hat{P}$ be the path $P$ together with the orientations on its edges. Suppose that there is a directed path of length 2 in $\hat{P}$, which we will assume without loss of generality to be $a\rightarrow b\rightarrow c$. Recalling that $R$ is a partial order, $aRb$ and $bRc$ together implies $aRc$, which contradicts the fact that $ac\notin E(G_0)$. Thus, there is not directed path of length 2 in $\hat{P}$. We can therefore assume without loss of generality that $\hat{P}$ is the path $a\rightarrow b\leftarrow c\rightarrow d$. We then have both $aRb$ and $cRb$, which implies by the definition of $R$ that either $aRc$ or $cRa$. But then $ac\in E(G_0)$, which is a contradiction.

Now let us suppose for the sake of contradiction that $G_i$ is not a cograph, for some $i\in\{1,2,\ldots,k\}$. Then there exists an induced path $P=abcd$ in $G_i$. Let $Q$ be the path $bdac$ in $\overline{G_i}$. From the definition of $G_i$, it is clear that if $uv\in E(\overline{G_i})$, then either $uRv$ and $c(u)=i$ or $vRu$ and $c(v)=i$. Now let us orient the edges of $Q$ such that an edge $uv\in E(Q)$ gets oriented from $u$ to $v$ if and only if $uRv$. Let $\hat{Q}$ be the path $Q$ together with the orientations on its edges. Suppose that there is a directed path of length 2 in $\hat{Q}$, which we will assume without loss of generality to be $b\rightarrow d\rightarrow a$. By our observation above, we have that $bRd$, $dRa$, and $c(b)=c(d)=i$. This means that $bRa$ and that $h(d)$ lies on the path between $h(b)$ and $h(a)$ in $T$. Since $ab\in E(G)$, the former and the definition of $h$ together implies that $b\in f(h(a))$. By property~\ref{prop:conn} of tree decompositions, the latter now implies that $b\in f(h(d))$. Since $d\in f(h(d))$, this contradicts the fact that $c(b)=c(d)$. Therefore, we can assume that there is no directed path of length 2 in $\hat{Q}$. Then we can assume without loss of generality that $\hat{Q}$ is $b\rightarrow d\leftarrow a\rightarrow c$. This means that we have $bRd$, $aRd$, and $c(b)=c(a)=i$. By the definition of $R$, we then have that either $bRa$ or $aRb$. Since $c(a)=c(b)$, we have $ab\notin E(G)$. As $c(a)=c(b)=i$, we further have $ab\notin E(G_i)$, which is a contradiction.
\end{proof}

\begin{Theorem}\label{thm:treewidth}
For any graph $G$, $\dim_{COG}(G)\leq\tw(G)+2$.
\end{Theorem}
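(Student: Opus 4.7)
The theorem should follow almost immediately from Lemma~\ref{lem:twpw}, which is the real workhorse already established in the section. My plan is to simply instantiate that lemma with a minimum-width tree decomposition.

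Concretely, let $G$ be an arbitrary graph and set $t = \tw(G)$. By the definition of treewidth, there exists a tree decomposition $(T,f)$ of $G$ whose width equals $t$. In the notation of Lemma~\ref{lem:twpw}, the width is $k-1$, so I take $k = t+1$. Applying the lemma then produces cographs $G_0, G_1, \ldots, G_k$, that is $t+2$ cographs in total, such that $G = G_0 \cap G_1 \cap \cdots \cap G_k$. By definition of $\dim_{COG}$, this gives $\dim_{COG}(G) \leq k+1 = t+2 = \tw(G)+2$, as desired.

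Since all the content lives in Lemma~\ref{lem:twpw}, there is no real obstacle here; the only thing to double-check is the off-by-one in the width convention (width $k-1$ corresponds to bags of size at most $k$), which is why a treewidth-$t$ graph yields $t+2$ rather than $t+1$ cographs. No further case analysis or construction is needed beyond citing the lemma.
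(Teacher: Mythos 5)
Your proof is correct and is essentially identical to the paper's: both take a tree decomposition of width $\tw(G)$, set $k=\tw(G)+1$ so the width is $k-1$, and apply Lemma~\ref{lem:twpw} to obtain $k+1=\tw(G)+2$ cographs whose intersection is $G$. The off-by-one bookkeeping you flag is handled the same way in the paper.
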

\begin{proof}
Let $(T,f)$ be a tree decomposition of $G$ having width $\tw(G)$. Then by Lemma~\ref{lem:twpw}, there exist $\tw(G)+2$ cographs $G_0,G_1,\ldots,G_{\tw(G)+1}$ such that $G=G_0\cap G_1\cap\cdots\cap G_{\tw(G)+1}$. This proves the theorem.
\end{proof}

\begin{Corollary}\label{cor:cogdimp2tree}
For every partial 2-tree $G$, $\dim_{COG}(G)\leq 4$.
\end{Corollary}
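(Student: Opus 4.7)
The plan is to apply Theorem~\ref{thm:treewidth} directly, since the corollary concerns exactly the class of graphs to which that theorem gives a clean numerical specialization. Recall that the Preliminaries section (citing~\cite{Bodlaender98}) records the fact that partial 2-trees are exactly those graphs $G$ with $\tw(G) \leq 2$. So the entire proof reduces to a one-line computation: substitute the treewidth bound into the treewidth-based cograph dimension bound.

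Concretely, I would first invoke the characterization of partial 2-trees to conclude $\tw(G) \leq 2$. Then I would apply Theorem~\ref{thm:treewidth}, which states $\dim_{COG}(G) \leq \tw(G) + 2$, to obtain $\dim_{COG}(G) \leq 2 + 2 = 4$. No further construction is needed because Theorem~\ref{thm:treewidth} itself is proved via Lemma~\ref{lem:twpw}, which already supplies the $k+1$ cographs whose intersection is $G$, given a tree decomposition of width $k-1$; here $k = 3$, yielding $4$ cographs.

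There is essentially no obstacle in this argument, since both ingredients (the treewidth bound and the characterization of partial 2-trees) are already in place earlier in the paper. The only place where one might want to say something extra is to note that the bound $4$ is the best that can be extracted from Theorem~\ref{thm:treewidth} for this class, since a $C_n$ with $n \geq 7$ is a partial 2-tree (in fact outerplanar, hence a partial 2-tree by the remark in the Preliminaries), and by Theorem~\ref{Thm:cographdimcycle} such a cycle has cograph dimension at least $3$; so the corollary is tight up to a difference of at most $1$ for this subclass. But this observation is not needed to prove the stated inequality itself, and the corollary follows immediately from the two cited results.
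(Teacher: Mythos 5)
Your proof is correct and is exactly the argument the paper intends: the corollary follows immediately from Theorem~\ref{thm:treewidth} together with the fact (recorded in the Preliminaries) that partial 2-trees are precisely the graphs of treewidth at most 2, giving $\dim_{COG}(G)\leq\tw(G)+2\leq 4$. The extra remark on near-tightness via cycles is fine but, as you note, not needed.
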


Since outerplanar graphs are partial 2-trees, this means that every outerplanar graph is the intersection of at most 4 cographs.


A tree decomposition $(T,f)$ of a graph $G$ is said to be a \emph{path decomposition} of $G$ if $T$ is a path. The pathwidth of $G$, denoted by $\pw(G)$, is defined as the minimum width of a path decomposition of $G$. Clearly, for any graph $G$, $\tw(G)\leq\pw(G)$.

\begin{Theorem}\label{thm:pathwidth}
For every graph $G$, $\dim_{TH}(G)\leq\pw(G)+1$.
\end{Theorem}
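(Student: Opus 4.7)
The plan is to re-use the construction from Lemma~\ref{lem:twpw} applied to a path decomposition of $G$ having width $\pw(G)$, and to observe two simplifications that occur because the decomposition tree $T$ is a path. First, the ``partial-order'' cograph $G_0$ becomes trivial and can be discarded, knocking one graph off the count. Second, each of the remaining cographs $G_1,\ldots,G_{\pw(G)+1}$ turns out to also be a split graph, hence a threshold graph. Combining the two gives exactly the $\pw(G)+1$ threshold graphs we need.

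More concretely, I would root $T$ at an endpoint. Then the ancestor relation on $V(T)$ is a total order, so for any two distinct vertices $u,v\in V(G)$ one of $h(u),h(v)$ is strictly an ancestor of the other, or $h(u)=h(v)$ in which case the auxiliary linear order $<$ breaks the tie. Thus the relation $R$ defined in the proof of Lemma~\ref{lem:twpw} becomes a \emph{total} order on $V(G)$. Under totality, the set $\{uv\colon (u,v),(v,u)\notin R\}$ used in the definition of each $G_i$ is empty, and $G_0$ becomes the complete graph on $V(G)$, contributing nothing to the intersection. So $G=G_1\cap\cdots\cap G_k$ with $k=\pw(G)+1$, and each $G_i$ is already known to be a cograph by Lemma~\ref{lem:twpw}.

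The key remaining step is to verify that each $G_i$ is a split graph. Let $A_i=c^{-1}(i)$ and $B_i=V(G)\setminus A_i$, where $c$ is the proper $k$-colouring from the proof of Lemma~\ref{lem:twpw}. For $u,v\in B_i$, totality of $R$ lets us assume $uRv$, and then $c(u)\neq i$ triggers the second clause in the definition of $E(G_i)$, putting $uv\in E(G_i)$; hence $B_i$ is a clique. For $u,v\in A_i$, we have $c(u)=c(v)=i$, so the second clause never fires, and since $c$ is a proper colouring of $G$ we also have $uv\notin E(G)$; hence $A_i$ is independent. So $G_i$ is a split graph, and being a split cograph it is a threshold graph by the characterisation of threshold graphs recalled in the Introduction.

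There is really no single hard step here; the main technical point is making the bookkeeping match up, i.e.\ noticing that both simplifications (the disappearance of $G_0$ and the split structure of the $G_i$) are exactly what is forced when $R$ becomes a total order, and that together they upgrade the generic cograph bound $\pw(G)+2$ from Lemma~\ref{lem:twpw} to a threshold bound of $\pw(G)+1$. The conclusion $\dim_{TH}(G)\leq\pw(G)+1$ then follows immediately.
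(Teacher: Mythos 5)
Your proposal is correct and follows essentially the same route as the paper's own proof: root the path decomposition at an endvertex so that $R$ becomes a total order, discard the now-complete $G_0$, and observe that each remaining $G_i$ is a split cograph and hence a threshold graph. No gaps.
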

\begin{proof}
Let $(T,f)$ be a path decomposition of $G$ of width $\pw(G)$. Following the proof of Lemma~\ref{lem:twpw}, we can select an endvertex of the path $T$ to be the root $r$ and construct the graphs $G_0,G_1,\ldots,G_{\pw(G)+1}$. Then, the relation $R$ defined on $V(G)$ has the property that for any two vertices $u,v\in V(G)$, we have either $uRv$ or $vRu$. Therefore, the graphs $G_0,G_1,\ldots,$ $G_{\pw(G)+1}$ have the following properties:
\begin{enumerate}
\vspace{-0.05in}
\itemsep 0in
\renewcommand{\labelenumi}{(\alph{enumi})}
\item[(a)] $G_0$ is a complete graph, and
\item[(b)] for each $i\in\{1,2,\ldots,\pw(G)+1\}$, $G_i$ is a split graph (the vertices in $\{u\colon c(u)=i\}$ form an independent set and the rest form a clique in $G_i$), and therefore a threshold graph.
\end{enumerate}
From (a), we have that $G=G_1\cap G_2\cap\cdots\cap G_{\pw(G)+1}$ and so by (b), we conclude that the graph $G$ can be represented as the intersection of $\pw(G)+1$ threshold graphs. Thus, we have the theorem.
\end{proof}

Although Corollary~\ref{cor:thresholdouterplanar} says that every outerplanar graph is the intersection of at most 6 threshold graphs, we do not know of any outerplanar graphs that have threshold dimension more than 3 (cycles of more than 6 vertices have threshold dimension equal to 3 as shown in Section~\ref{sec:cycles}). Using Theorem~\ref{thm:pathwidth}, we can get upper bounds better than 6 for the threshold dimension for a special kind of outerplanar graph.

The weak dual $G^*$ of an outerplanar graph $G$, given some planar embedding of $G$, is its dual graph with the vertex corresponding to the outer face removed. That is, $V(G^*)$ is the set of internal faces of $G$ and there is an edge between two internal faces $f$ and $f'$ in $G^*$ if and only if they share an edge in $G$.
Let $G$ be a 2-connected outerplanar graph whose weak dual is a path. Construct another outerplanar graph $G'$ by adding edges to $G$ such that every internal face of $G'$ is a triangle and the weak dual of $G'$ is also a path. It can be seen that $G'$ is an interval graph (see Theorem~8.1 in~\cite{golumbic}) with no clique of size more than 3 (as $G'$ is outerplanar). It follows that $\pw(G)\leq\pw(G')\leq 2$ (see Theorem~29 in~\cite{Bodlaender98}). Then we can use Theorem~\ref{thm:pathwidth} to get the following result, which is a generalization of Remark~\ref{rem:dimcographcycle}.

\begin{Corollary}\label{cor:weakdualpath}
If $G$ is a 2-connected outerplanar graph whose weak dual is a path, then $\dim_{COG}(G)\leq \dim_{TH}(G)\leq 3$.
\end{Corollary}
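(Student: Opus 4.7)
The plan is to apply Theorem~\ref{thm:pathwidth} after showing that $\pw(G) \leq 2$; this gives $\dim_{TH}(G) \leq \pw(G)+1 \leq 3$, and the bound $\dim_{COG}(G) \leq \dim_{TH}(G)$ follows because every threshold graph is a cograph (threshold graphs are precisely the split graphs that are also cographs, as noted in the preliminaries).

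To bound $\pw(G)$, I would exhibit a supergraph $G'$ of $G$ on the same vertex set that is an interval graph with $\omega(G') \leq 3$. Any path decomposition of $G'$ is also a path decomposition of $G$, so $\pw(G) \leq \pw(G')$. For any interval graph $H$ one has $\pw(H) = \omega(H)-1$ (interval graphs are chordal, and for chordal graphs the pathwidth equals the clique number minus one provided a suitable clique-path arrangement exists), which then yields $\pw(G') \leq 2$.

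The supergraph $G'$ is built by triangulating the internal faces of the outerplanar embedding of $G$. Using that the weak dual of $G$ is a path $F_1F_2\cdots F_m$ with consecutive faces $F_i, F_{i+1}$ sharing an edge $e_i$, I would triangulate each non-triangular face by fanning all diagonals from a boundary vertex chosen to be an endpoint of $e_{i-1}$ (for $F_1$, an endpoint of $e_1$). All diagonals lie inside internal faces, so $G'$ remains outerplanar, and hence $\omega(G') \leq 3$. The main obstacle in carrying this out is verifying that such a fan triangulation preserves the weak-dual-is-a-path property: the chosen fan apex must be incident with the shared edges at both ends of $F_i$ (or appropriate ones must be chosen) so that the refinement of $F_i$ into a chain of triangles links up consistently with its dual neighbours. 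Once this is done, a maximal outerplanar graph whose weak dual is a path is known to be an interval graph (Theorem~8.1 of~\cite{golumbic}), and the chain $\dim_{TH}(G) \leq \pw(G)+1 \leq \pw(G')+1 = \omega(G') \leq 3$ closes the argument, with $\dim_{COG}(G) \leq \dim_{TH}(G)$ following from the inclusion of threshold graphs in cographs.
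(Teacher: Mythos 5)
Your route is exactly the paper's: triangulate the internal faces so that the weak dual of the resulting maximal outerplanar graph $G'$ is still a path, invoke Theorem~8.1 of~\cite{golumbic} to conclude $G'$ is an interval graph with $\omega(G')\leq 3$, deduce $\pw(G)\leq\pw(G')=\omega(G')-1\leq 2$, and finish with Theorem~\ref{thm:pathwidth} together with the inclusion of threshold graphs in cographs. The one point that needs repair is precisely the one you flagged: a fan from an endpoint of $e_{i-1}$ does not in general place the triangle containing $e_i$ at the other end of the dual path of $F_i$'s triangulation. For example, if $F_i$ is a hexagon $w_1w_2\cdots w_6$ with $e_{i-1}=w_1w_2$ and $e_i=w_4w_5$, then a fan from $w_1$ gives the dual path $w_1w_2w_3,\,w_1w_3w_4,\,w_1w_4w_5,\,w_1w_5w_6$, in which $e_i$ lies on an interior triangle; the fan from $w_2$ fails symmetrically, so no single-apex fan works here. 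The fix is to use a serpentine triangulation instead of a fan: split the boundary of $F_i$ into the two arcs joining $e_{i-1}$ to $e_i$ and triangulate by repeatedly cutting off a triangle that advances along one of the two arcs. Every triangle then retains a boundary edge of $F_i$, the dual of the triangulation of $F_i$ is a path whose two end triangles contain $e_{i-1}$ and $e_i$ respectively, and gluing these chains along the shared edges keeps the weak dual of $G'$ a path. The paper's own proof simply asserts that such a triangulation of each face exists without exhibiting it, so with this one substitution your argument is complete and coincides with the published one.
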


\section{Upper bounds for cograph dimension using vertex partitions}\label{sec:vertpart}

In this section, we borrow a technique from~\cite{kratochvil1994intersection} which can be used for graphs whose vertex set  can be partitioned in such a way that each pair of parts induces a subgraph with a bounded cograph dimension.
The following lemma that we use is a slight variation of a lemma that appears in~\cite{kratochvil1994intersection}. This technique and its role in connecting certain intersection dimensions of a graph with its acyclic and star chromatic numbers also appears in~\cite{aravindcrs}.

Recall that $G_1+G_2$ denotes the join of two graphs $G_1$ and $G_2$.

Let $\alpha:\mathbb{N}^+\rightarrow\mathbb{N}^+$ be the function  $\alpha(x)=\left\{\begin{array}{ll}x&\mbox{if }x\mbox{ is odd}\\x-1&\mbox{if }x\mbox{ is even}\end{array}\right.$.

\begin{Lemma}\label{lem:genpart}
Let $\mathcal{H}$ be a hereditary class of graphs which is closed under the join operation. If $G$ is a graph whose vertices can be partitioned into $k$ parts $V_1,V_2,\ldots,V_k$ in such a way that for any $i,j\in\{1,2,\ldots,k\}$, $\dim_{\mathcal{H}}(G[V_i\cup V_j])\leq t$, then $\dim_{\mathcal{H}}(G)\leq \alpha(k)t$.
\end{Lemma}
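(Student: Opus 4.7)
The plan is to use a proper edge colouring of the complete graph $K_k$ on the index set $\{1,2,\ldots,k\}$ to organise the construction. By a classical result, $\chi'(K_k)=k-1$ if $k$ is even and $\chi'(K_k)=k$ if $k$ is odd, which is exactly $\alpha(k)$. I fix such a colouring, yielding matchings $M_1,M_2,\ldots,M_{\alpha(k)}$ of $K_k$; when $k$ is even each $M_\ell$ is a perfect matching, while when $k$ is odd each $M_\ell$ leaves exactly one uncovered index, which I call $r_\ell$, and a simple counting argument shows that every index in $\{1,\ldots,k\}$ is uncovered in exactly one matching.

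For every pair $i\neq j$, the hypothesis provides graphs $H^{(i,j)}_1,\ldots,H^{(i,j)}_t\in\mathcal{H}$ whose intersection is $G[V_i\cup V_j]$, and applying the hypothesis with $j=i$ likewise yields graphs $H^{(i)}_1,\ldots,H^{(i)}_t\in\mathcal{H}$ whose intersection is $G[V_i]$. For each $\ell\in\{1,\ldots,\alpha(k)\}$ and $s\in\{1,\ldots,t\}$, I define $G_{\ell,s}$ to be the iterated join of the graphs $H^{(i,j)}_s$ over all $\{i,j\}\in M_\ell$, together with $H^{(r_\ell)}_s$ when $k$ is odd. These graphs sit on disjoint vertex sets whose union is $V(G)$, so the iterated join is well defined, and closure of $\mathcal{H}$ under joins ensures $G_{\ell,s}\in\mathcal{H}$. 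In total this produces $\alpha(k)\cdot t$ graphs in $\mathcal{H}$.

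I then verify $G=\bigcap_{\ell,s}G_{\ell,s}$. Each $G_{\ell,s}$ is a supergraph of $G$: within each block of the join the relevant edges of $G$ are preserved because every $H$-graph is a supergraph of the corresponding induced subgraph of $G$, and edges of $G$ between vertices lying in distinct blocks are automatically included by the join. For a non-edge $uv$ of $G$ with $u\in V_i$, $v\in V_j$, I locate $(\ell,s)$ with $uv\notin E(G_{\ell,s})$ as follows. If $i\neq j$, the pair $\{i,j\}$ lies in a unique matching $M_\ell$, and since $\bigcap_s H^{(i,j)}_s=G[V_i\cup V_j]$ there is some $s$ with $uv\notin E(H^{(i,j)}_s)$, which makes $uv$ a non-edge of $G_{\ell,s}$. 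If $i=j$, then for $k$ odd I take the unique $\ell$ with $r_\ell=i$ and use $\bigcap_s H^{(i)}_s=G[V_i]$, while for $k$ even I take any $\ell$, let $j_\ell$ be the partner of $i$ in $M_\ell$, and use the fact that $\bigcap_s H^{(i,j_\ell)}_s$ restricted to $V_i$ equals $G[V_i]$.

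The main subtlety is the odd/even dichotomy. The fact that $\chi'(K_k)=k$ (rather than $k-1$) when $k$ is odd is precisely what gives us one dedicated matching per index in which that index is uncovered, so that within-class non-edges of $G$ can be separated by the self-pair graphs $H^{(i)}_s$ without spending an extra matching; in the even case the analogous role is played automatically by the $H^{(i,j_\ell)}_s$ block containing $V_i$ in any matching $M_\ell$. Once the edge colouring is set up and this bookkeeping is done, the rest of the argument is routine.
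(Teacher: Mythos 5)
Your proof is correct and follows essentially the same route as the paper: a proper edge colouring of $K_k$ into $\alpha(k)$ matchings, with each matching giving a join of the two-class (and leftover one-class) induced subgraphs; you merely unroll the paper's appeal to the subadditivity and join-closure observations by constructing the $\alpha(k)\cdot t$ graphs $G_{\ell,s}$ explicitly. The bookkeeping for within-class non-edges in the even versus odd case is handled correctly.
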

\begin{proof}
We shall begin by noting that as $\mathcal{H}$ is hereditary, for any $i\in\{1,2,\ldots,k\}$, $\dim_{\mathcal{H}} (G[V_i])\leq t$.
Consider the complete graph $K_k$ whose vertices are labelled from $c_1$ to $c_k$. For an edge $e=c_ic_j$ in this complete graph, we denote by $G_e$ the graph $G[V_i\cup V_j]$. For a matching $M$ in the complete graph, let $U_M$ denote the set of vertices of the complete graph that are not matched by $M$, i.e, the set of vertices of the complete graph that have no edge of $M$ incident on them. We also define $G_M=\Sigma_{e\in M} G_e + \Sigma_{c_i\in U_M} G[V_i]$. Since $\mathcal{H}$ is closed under the join operation, it follows by Observation~\ref{obs:join} that $\dim_{\mathcal{H}} (G_M)\leq t$. It is also easy to see that $G_M$ is a supergraph of $G$. Now consider a proper edge colouring of the complete graph $K_k$ using $\chi'(K_k)$ colours. This colouring can be seen as a partitioning of the edge set of the complete graph into matchings $M_1,M_2,\ldots,M_{\chi'(K_k)}$. Consider $uv\notin E(G)$. If there exists $i$ such that $u,v\in V_i$, then $uv\notin E(G_{M_j})$ for any $j$. If there is no such $i$, then surely there exists $i,j$ such that $u\in V_i$ and $v\in V_j$. Let $M_r$ be the matching that contains the edge $c_ic_j$ of the complete graph. Then, it can be seen that $uv\notin E(G_{M_r})$. This allows us to conclude that $G=\bigcap_{1\leq i\leq\chi'(K_k)} G_{M_i}$. This implies that $\dim_{\mathcal{H}}(G)\leq \chi'(K_k)t$. The proof is completed by noting the well known fact that $\chi'(K_k)=\alpha(k)$.
\end{proof}

The \emph{acyclic chromatic number} of a graph $G$, denoted by $\chi_a(G)$, is the minimum number of colours required to properly vertex colour $G$ such that the union of any two colour classes induces a forest in $G$.

We would like to note that some generalized variants of Theorems~\ref{thm:acyc} and~\ref{thm:star} appear in the work of Aravind and Subramanian~\cite{aravindcrs}. Theorem~9 and Corollary~11(b) of~\cite{aravindcrs} gives upper bounds on the intersection dimension of a graph $G$ with respect to a hereditary class that is closed under the disjoint union and join operations. We note here that any such class has to be the same as the class of cographs and therefore these bounds are in fact upper bounds on the cograph dimension of a graph $G$. The following theorem can also be obtained by combining Theorem~9 of~\cite{aravindcrs} and Corollary~\ref{cor:cographdimforests} in this paper.
\begin{Theorem}\label{thm:acyc}
For any graph $G$, $\dim_{COG}(G)\leq 2\cdot\alpha(\chi_a(G))$.
\end{Theorem}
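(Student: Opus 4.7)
The plan is to apply Lemma~\ref{lem:genpart} with $\mathcal{H}$ being the class of cographs, using an acyclic coloring of $G$ to produce the vertex partition. The key observation is that the class of cographs is hereditary and closed under the join operation, so Lemma~\ref{lem:genpart} is directly applicable.

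Concretely, first I would fix an optimal acyclic vertex coloring of $G$ using $\chi_a(G)$ colors, and let $V_1, V_2, \ldots, V_{\chi_a(G)}$ be the resulting color classes. By the definition of the acyclic chromatic number, for every pair of indices $i,j \in \{1,2,\ldots,\chi_a(G)\}$, the union $V_i \cup V_j$ induces a forest in $G$, i.e., $G[V_i \cup V_j]$ is a forest. By Corollary~\ref{cor:cographdimforests}, every forest has cograph dimension at most $2$, so $\dim_{COG}(G[V_i \cup V_j]) \leq 2$ for every such pair.

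Next I would invoke Lemma~\ref{lem:genpart} with $\mathcal{H}$ equal to the class of cographs, $k = \chi_a(G)$, and $t = 2$. The hypotheses of the lemma are met: cographs form a hereditary class that is closed under the join operation (noted in the preliminaries), and the partition just constructed satisfies the required bound on $\dim_{COG}(G[V_i \cup V_j])$. The conclusion of the lemma gives $\dim_{COG}(G) \leq \alpha(\chi_a(G)) \cdot 2 = 2 \cdot \alpha(\chi_a(G))$, which is exactly the statement of the theorem.

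There is really no serious obstacle here; the proof is a two-line combination of Corollary~\ref{cor:cographdimforests} and Lemma~\ref{lem:genpart}. The only subtlety worth flagging is ensuring that Lemma~\ref{lem:genpart} applies to the class of cographs, which requires both the hereditary property and closure under join; both are standard facts about cographs and have been recalled earlier in the paper.
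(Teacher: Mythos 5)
Your proof is correct and is essentially identical to the paper's: both take an optimal acyclic coloring, observe via Corollary~\ref{cor:cographdimforests} that each pair of color classes induces a graph of cograph dimension at most $2$, and apply Lemma~\ref{lem:genpart} with $t=2$. No issues.
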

\begin{proof}
Let $k=\chi_a(G)$.
Consider an acyclic vertex colouring of $G$ using $k$ colours. Let $V_1,V_2,\ldots,V_k$ denote the colour classes into which $V(G)$ gets partitioned by the colouring. Note that since for any $i,j\in\{1,2,\ldots,k\}$, $G[V_i\cup V_j]$ is a forest, we have from Corollary~\ref{cor:cographdimforests} that $\dim_{COG}(G[V_i\cup V_j])\leq 2$. Now from Lemma~\ref{lem:genpart}, we have the theorem.
\end{proof}

By Borodin's Theorem~\cite{borodin1979acyclic}, the acyclic chromatic number of any planar graph is at most 5. Therefore, we have the following corollary.

\begin{Corollary}\label{cor:cogdimplanar}
For any planar graph $G$, $\dim_{COG}(G)\leq 10$.
\end{Corollary}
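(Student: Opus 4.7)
The plan is to combine Theorem~\ref{thm:acyc} directly with Borodin's Theorem on the acyclic chromatic number of planar graphs. Since Theorem~\ref{thm:acyc} gives $\dim_{COG}(G) \leq 2 \cdot \alpha(\chi_a(G))$ for any graph $G$, and Borodin's Theorem tells us that $\chi_a(G) \leq 5$ whenever $G$ is planar, it remains only to track how the function $\alpha$ behaves on the interval $\{1,2,\ldots,5\}$ and arrive at a uniform upper bound.

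First I would invoke Borodin's bound to conclude $\chi_a(G) \leq 5$. Then, since $\alpha$ is monotone on the relevant range (with $\alpha(1)=1$, $\alpha(2)=1$, $\alpha(3)=3$, $\alpha(4)=3$, $\alpha(5)=5$), we get $\alpha(\chi_a(G)) \leq \alpha(5) = 5$. Plugging into Theorem~\ref{thm:acyc} yields $\dim_{COG}(G) \leq 2 \cdot 5 = 10$, as claimed. Strictly speaking, one should check that $\alpha$ really is monotone nondecreasing, but this is immediate from the definition: increasing $x$ by $1$ either leaves $\alpha(x)$ unchanged (odd-to-even step) or increases it by $2$ (even-to-odd step).

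There is essentially no obstacle here: the corollary is a direct composition of two already-established results. The only thing to be mildly careful about is applying $\alpha$ to the \emph{bound} $5$ rather than to $\chi_a(G)$ itself, which is justified by the monotonicity remark above; otherwise one might worry that a planar graph with, say, $\chi_a(G) = 4$ could give a worse estimate, but $\alpha(4) = 3 \leq \alpha(5) = 5$, so the worst case is indeed $\chi_a(G) = 5$ and the bound $10$ is safe across all planar graphs.
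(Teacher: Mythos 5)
Your proof is correct and is essentially the paper's own argument: Borodin's bound $\chi_a(G)\leq 5$ for planar graphs combined with Theorem~\ref{thm:acyc} gives $\dim_{COG}(G)\leq 2\cdot\alpha(5)=10$. Your extra remark about the monotonicity of $\alpha$ is a valid (if minor) point of care that the paper leaves implicit.
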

\medskip

A \emph{star colouring} of a graph $G$ is a proper vertex colouring of $G$ such that the union of any two colour classes induces in $G$ a forest whose every component is a star (such a forest is called a ``star forest''). In other words, it is a proper vertex colouring of $G$ such that every path on 4 vertices in $G$ needs at least three colours. The minimum number of colours required in any star colouring of a graph $G$ is called its \emph{star chromatic number}, denoted by $\chi_s(G)$. It follows from Corollary~11(b) of~\cite{aravindcrs} that for any graph $G$, $\dim_{COG}(G)\leq\chi_s(G)$. The following theorem essentially states this, with the small improvement that $\chi_s(G)$ is replaced with $\alpha(\chi_s(G))$.

\begin{Theorem}\label{thm:star}
For any graph $G$, $\dim_{COG}(G)\leq\alpha(\chi_s(G))$.
\end{Theorem}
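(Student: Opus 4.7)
The plan is to apply Lemma~\ref{lem:genpart} directly with the class $\mathcal{H} = COG$ of cographs and with $t = 1$, using a star coloring of $G$ to produce the required vertex partition. The class of cographs is hereditary and closed under the join operation, so Lemma~\ref{lem:genpart} is applicable.

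First, I would fix an optimal star coloring of $G$ using $k = \chi_s(G)$ colors, and let $V_1, V_2, \ldots, V_k$ denote the resulting color classes. By the defining property of a star coloring, for any $i, j \in \{1, 2, \ldots, k\}$, the subgraph $G[V_i \cup V_j]$ is a star forest, i.e.\ a disjoint union of stars $K_{1,m}$.

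The second step is to observe that every star forest is itself a cograph, so that $\dim_{COG}(G[V_i \cup V_j]) \leq 1$ for every pair $i,j$. Each star $K_{1,m}$ is the join of the isolated center vertex with the independent set of leaves, and both $K_1$ and $\overline{K_m}$ are cographs; hence each star is a cograph. A disjoint union of cographs is again a cograph, so any star forest is a cograph.

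Finally, applying Lemma~\ref{lem:genpart} with $\mathcal{H} = COG$ and $t = 1$ to the partition $V_1, \ldots, V_k$ yields $\dim_{COG}(G) \leq \alpha(k) \cdot 1 = \alpha(\chi_s(G))$, as desired. There is no serious obstacle here: the entire argument is a one-line instantiation of the framework set up in Lemma~\ref{lem:genpart}. The only thing to check carefully is that star forests are cographs (and in particular that the hypothesis of Lemma~\ref{lem:genpart} is satisfied with $t=1$), which is immediate since star forests contain no induced $P_4$.
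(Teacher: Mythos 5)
Your proposal is correct and follows exactly the paper's argument: take the color classes of an optimal star coloring, note that every pair of classes induces a star forest (hence a cograph, so $t=1$), and invoke Lemma~\ref{lem:genpart}. The extra verification that star forests are $P_4$-free cographs is a fine, if routine, addition.
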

\begin{proof}
Let $k=\chi_s(G)$. Therefore, $V(G)$ can be partitioned into $k$ sets $V_1,V_2,\ldots,V_k$ such that for any $i,j\in\{1,2,\ldots,k\}$, $G[V_i\cup V_j]$ is a star forest. Since star forests are cographs, we have that for any $i,j\in\{1,2,\ldots,k\}$, $\dim_{COG}(G[V_i\cup V_j])=1$. Now from Lemma~\ref{lem:genpart}, we have the theorem.
\end{proof}

For planar graphs with lower bounds on girth, better bounds on the acyclic vertex colouring number and star chromatic number are known.
Every planar graph of girth at least 5 and 7 can be acyclically vertex coloured with 4 and 3 colours respectively~\cite{borodin1999acyclic}, implying that the cograph dimension of these graphs is at most 8 and 6 respectively. 
Further, every planar graph with girth at least 6, 7, 8, 9 and 13 can be star coloured with 8, 7, 6, 5 and 4 colours respectively~\cite{timmons2008star,kundgen2010star,bu2009star}. We therefore can use Theorems~\ref{thm:acyc} and~\ref{thm:star} to get the following.

\begin{Corollary}\label{cor:cographplgirth}
For any planar graph $G$ with girth $g$:
\begin{enumerate}
\renewcommand{\labelenumi}{(\roman{enumi})}
\item if $g\geq 5$, then $\dim_{COG}(G)\leq 8$
\item if $g\geq 6$, then $\dim_{COG}(G)\leq 7$,
\item if $g\geq 7$, then $\dim_{COG}(G)\leq 6$,
\item if $g\geq 8$, then $\dim_{COG}(G)\leq 5$, and
\item if $g\geq 13$, then $\dim_{COG}(G)\leq 3$.
\end{enumerate}
\end{Corollary}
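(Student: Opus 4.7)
The plan is to handle each of the five parts as an immediate consequence of either Theorem~\ref{thm:acyc} or Theorem~\ref{thm:star}, applied to the best-known upper bound on $\chi_a(G)$ or $\chi_s(G)$ for planar graphs in the specified girth regime. Recall that $\alpha(x)=x$ when $x$ is odd and $\alpha(x)=x-1$ when $x$ is even, so an even value of the chromatic parameter automatically provides a one-unit saving. All ingredients needed are already cited in the paragraph immediately preceding the corollary, so the argument reduces to choosing the right ``acyclic or star'' route for each item and performing the arithmetic.

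For (i), I would invoke the Borodin--Kostochka--Woodall bound $\chi_a(G)\leq 4$ for planar graphs of girth at least $5$; then Theorem~\ref{thm:acyc} gives $\dim_{COG}(G)\leq 2\alpha(4)=6\leq 8$. For (iii), the sharper estimate again comes from the acyclic route: $\chi_a(G)\leq 3$ for girth at least $7$, combined with Theorem~\ref{thm:acyc}, yields $\dim_{COG}(G)\leq 2\alpha(3)=6$. For (ii), (iv) and (v) the star route is the one that gives the claimed values: using that planar graphs of girth at least $6$, $8$ and $13$ admit star colourings with $8$, $6$ and $4$ colours respectively, Theorem~\ref{thm:star} outputs $\alpha(8)=7$, $\alpha(6)=5$ and $\alpha(4)=3$ in turn.

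There is no real obstacle; the proof is entirely routine once the correct theorem-and-citation pair has been selected for each item. The only genuinely substantive check is comparing, for each girth, the bound produced by the acyclic route $2\alpha(\chi_a(G))$ against that produced by the star route $\alpha(\chi_s(G))$, and keeping the smaller one. For (iii) in particular, the star chromatic estimate $\chi_s(G)\leq 7$ would give only $\alpha(7)=7$, which is strictly worse than the value $6$ delivered by Theorem~\ref{thm:acyc}; so one must be careful to route part (iii) through acyclic rather than star colouring.
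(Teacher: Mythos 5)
Your proposal is correct and follows essentially the same route as the paper: each item is obtained by feeding the cited acyclic or star chromatic number bounds for planar graphs of the given girth into Theorem~\ref{thm:acyc} or Theorem~\ref{thm:star} and taking the better of the two. Your observation on part (i) is in fact a valid sharpening: since $\chi_a(G)\leq 4$ for girth at least $5$, Theorem~\ref{thm:acyc} gives $2\alpha(4)=6$, so the paper's stated bound of $8$ (and hence also the bound of $7$ in part (ii)) could be improved to $6$.
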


As there exist planar bipartite graphs with star chromatic number at least 8~\cite{kierstead2009star}, we cannot hope to use this technique to get an improvement over the bound given in Corollary~\ref{cor:plbipthdim}.

A number of upper bounds are known for the acyclic chromatic number and star chromatic number of various special classes of graphs. These bounds can be used with Theorems~\ref{thm:acyc} and~\ref{thm:star} to establish upper bounds on the cograph dimension of graphs belonging to these classes.

\bigskip

\noindent\textbf{A note on the tightness of the upper bound of Fertin et al.}
The results of Fertin, Raspaud and Reed~\cite{fertin2004star} imply that for outerplanar graph (in fact, any partial 2-tree) $G$, $\chi_s(G)\leq 6$. Thus Theorem~\ref{thm:star} does not give us a bound better than the one given by Corollary~\ref{cor:cogdimp2tree}. An upper bound better than 6 for the star chromatic number of outerplanar graphs would have been helpful, but there are outerplanar graphs with star chromatic number equal to 6, and an example is given in~\cite{fertin2004star}. However, that example contains 48 vertices and 93 edges, and was shown to have star chromatic number 6 using a computer check. We give a simpler outerplanar graph which can be proven (without the need of a computer) to have star chromatic number at least 6.

 \begin{figure}[h]
 \begin{center}
  \renewcommand{\vertexset}{(x1,1,3),(x2,3,3),(x3,3,1),(x4,1,1),
  (y11,0,3),(y12,0.33,3.33),(y13,0.66,3.66),(y14,1,4),
  (y21,3,4),(y22,3.33,3.66),(y23,3.66,3.33),(y24,4,3),
  (y31,4,1),(y32,3.66,0.66),(y33,3.33,0.33),(y34,3,0),
  (y41,0,1),(y42,0.33,0.66),(y43,0.66,0.33),(y44,1,0)}
  \renewcommand{\edgeset}{(x1,x2),(x2,x3),(x3,x4),(x4,x1),(x2,x4),
  (x1,y11),(x1,y12),(x1,y13),(x1,y14),
  (x2,y21),(x2,y22),(x2,y23),(x2,y24),
  (x3,y31),(x3,y32),(x3,y33),(x3,y34),
  (x4,y41),(x4,y42),(x4,y43),(x4,y44),
  (y11,y12),(y12,y13),(y13,y14),
  (y21,y22),(y22,y23),(y23,y24),
  (y31,y32),(y32,y33),(y33,y34),
  (y41,y42),(y42,y43),(y43,y44)}
  \renewcommand{\defradius}{.1}
  \begin{tikzpicture}[scale=0.75]
  \drawgraph
  \node[below left] at (\xy{x1}) {$x$};
  \node[below right] at (\xy{x2}) {$y$};
  \node[above right] at (\xy{x3}) {$w$};
  \node[above left] at (\xy{x4}) {$z$};
  \node[above left=15] at (\xy{x1}) {$P_x$};
   \node[above right=14] at (\xy{x2}) {$P_y$};
   \node[below right=14] at (\xy{x3}) {$P_w$};
   \node[below left=14] at (\xy{x4}) {$P_z$};
   \path (\xy{y11}) -- (\xy{y14}) node [midway, above, sloped] {$\overbrace{\hspace{.5in}}$};
   \path (\xy{y21}) -- (\xy{y24}) node [midway, above, sloped] {$\overbrace{\hspace{.5in}}$};
   \path (\xy{y31}) -- (\xy{y34}) node [midway, below, sloped] {$\underbrace{\hspace{.5in}}$};
   \path (\xy{y41}) -- (\xy{y44}) node [midway, below, sloped] {$\underbrace{\hspace{.5in}}$};
  \end{tikzpicture}
 \end{center}
  \caption{The graph $G$.}
  \label{fig:tightexample}
 \end{figure}
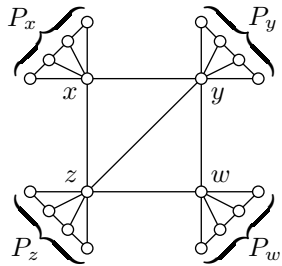

\begin{Theorem}\label{thm:startight}
The outerplanar graph $G$ shown in Figure~\ref{fig:tightexample} has star chromatic number at least 6.
\end{Theorem}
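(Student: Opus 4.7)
The plan is to assume $G$ admits a star coloring using only $5$ colors and derive a contradiction. Since $\{x,y,z\}$ and $\{y,z,w\}$ are triangles in $G$, the colors $c(x), c(y), c(z)$ must be pairwise distinct and so must $c(y), c(z), c(w)$; after relabelling I may take $c(y)=2$, $c(z)=3$, $c(x)=1$. For each pendant path $P_v = v_1 v_2 v_3 v_4$, the join with its central vertex $v$ forces $c(P_v) \subseteq \{1,\dots,5\} \setminus \{c(v)\}$; and since $P_v$ is itself a $P_4$ that cannot be bicoloured in a star coloring, $|c(P_v)| \in \{3,4\}$. I will write $m(v)$ for the unique color in $\{1,\dots,5\} \setminus (\{c(v)\} \cup c(P_v))$ when $|c(P_v)|=3$, and set $m(v) = \perp$ otherwise.

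The key structural reduction is that the only $P_4$s in $G$ that can be bicoloured are either (i) a pendant path $P_v$ itself, or (ii) a path of the form $u\text{-}a\text{-}b\text{-}v$ with $a,b$ adjacent central vertices, $u\in P_a$, $v\in P_b$; every other $P_4$ automatically picks up a third color because it either contains three central vertices (whose colors are distinct) or pivots on a central vertex whose color is forbidden on its adjacent pendants. For a $P_4$ of type (ii), bicoloration is equivalent to $c(u)=c(b)$ and $c(v)=c(a)$, that is, to the failure of ``$m(a)=c(b)$ or $m(b)=c(a)$''. I use this first to rule out $c(w)=c(x)$: if $c(w)=1$, then applying the threats $u\text{-}x\text{-}y\text{-}w$ and $u\text{-}x\text{-}z\text{-}w$ for $u\in P_x$ gives $2,3\notin c(P_x)$, so $c(P_x)\subseteq\{4,5\}$, contradicting $|c(P_x)|\ge 3$. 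By the symmetry between the two color labels $4,5$ not used by any central vertex, I may then set $c(w)=4$.

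The five edges of the central block now yield the system
\begin{align*}
(xy)&:\ m(x)=2 \text{ or } m(y)=1,\\
(xz)&:\ m(x)=3 \text{ or } m(z)=1,\\
(yz)&:\ m(y)=3 \text{ or } m(z)=2,\\
(yw)&:\ m(y)=4 \text{ or } m(w)=2,\\
(zw)&:\ m(z)=4 \text{ or } m(w)=3.
\end{align*}
The argument then finishes by case analysis on $m(y)\in\{\perp,1,3,4,5\}$: in each case at most one of the three ``$y$-constraints'' $(xy),(yz),(yw)$ is discharged by $m(y)$, so the remaining two force specific values of $m(x),m(z),m(w)$ from the other side, and these clash with $(xz)$ or $(zw)$. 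For example, $m(y)=1$ forces $m(z)=m(w)=2$ via $(yz)$ and $(yw)$, and then $(zw)$ becomes ``$m(z)=4$ or $m(w)=3$'', both false; the remaining four cases follow the same template. The main obstacle is the structural reduction to $P_4$ types (i) and (ii) — the case analysis itself is mechanical once the five constraints above are in place, but verifying the reduction requires a careful enumeration of $P_4$s in $G$ according to how many of their vertices are central and which pendant paths they touch.
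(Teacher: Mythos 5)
Your proof is correct and rests on the same two ingredients as the paper's: each pendant path, being a $P_4$, must receive at least three of the four colours available to it, and the bicolourable $P_4$s threading through the central block then over-constrain which colour each pendant path may omit. The difference is organisational. The paper fixes one extra piece of information by symmetry (that some vertex of $P_x$ receives colour $2$), which launches a single forced chain --- in your notation, $m(y)=1$, hence $m(z)=m(w)=2$, violating your constraint $(zw)$ --- so the paper's argument is precisely your $m(y)=1$ branch; you forgo that symmetry reduction and instead check all five possibilities for $m(y)$, which is slightly longer but makes the contradiction mechanical once the five disjunctive constraints are written down. Both the constraints and the case analysis check out. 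One wording issue you should fix: your ``structural reduction'' asserts that only types (i) and (ii) can be bicoloured, justified by the claim that three central vertices on a $P_4$ always get distinct colours. But $x$ and $w$ are non-adjacent, so the paths $uxyw$ and $uxzw$ with $u\in P_x$ --- exactly the threats you invoke to rule out $c(w)=c(x)$ --- are genuine exceptions to that claim until $c(w)\neq c(x)$ has been established. Nothing is missing mathematically, since you handle those paths explicitly, but the reduction should be stated as valid only after $c(w)=4$ is fixed (at which point all four central colours are distinct and the justification becomes sound).
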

\begin{proof}
Suppose for the sake of contradiction that $G$ has a star colouring $c$ using the colours 1, 2, 3, 4 and 5. Since every star coloring is also a proper vertex coloring, we can assume without loss of generality that $c(x)=1$, $c(y)=2$ and $c(z)=3$, and further that no vertex in $P_x$ has color 1. As $c$ is a star coloring, the vertices in $P_x$ receive at least 3 different colors, implying that at least one of the colors 2 or 3 is present in the path $P_x$. Let us assume without loss of generality that one of the vertices in $P_x$, which we shall denote by $x_2$, is colored 2. We now get that $c(w)\neq 1$, as otherwise, $x_2xyw$ would be a bicolored $P_4$. So we can assume without loss of generality that $c(w)=4$. If a vertex $y_1$ in $P_y$ is colored 1, then we would have the bicolored $P_4$ $x_2xyy_1$. Since $P_y$ also cannot be bicolored, we get that there is a vertex $y_i$ in $P_y$ that is colored $i$, for each $i\in\{3,4,5\}$. Now if there is a vertex $w_2$ in $P_w$ that is colored 2, then we will have a bicolored $P_4$ $y_4yww_2$, which is a contradiction. This implies that there is a vertex $w_3$ in $P_w$ that is colored 3. Similarly, since if there is a vertex $z_2$ in $P_z$ that is colored 2, there will be the bicolored $P_4$ $y_3yzz_2$, we get that there is a vertex $z_4$ in $P_z$ that is colored 4. Now the path $w_3wzz_4$ is a bicolored $P_4$, contradicting the fact that $c$ is a star coloring.
\end{proof}

\section{Conclusion}\label{sec:conclusion}
Table~\ref{tab:results} lists the upper bounds on the cograph dimension and threshold dimension of some of the subclasses of planar graphs that were studied.
\begin{table}[h]
\centering
 \begin{tabular}{|l|c|c|}
 \hline
 \multicolumn{1}{|c|}{Graph $G$ is} & $\dim_{COG}(G)\leq$ & $\dim_{TH}(G)\leq$ \\
 \hline
 Planar & 10 {\tiny (Corollary~\ref{cor:cogdimplanar})} & 12 {\tiny (Corollary~\ref{cor:thresholdplanar})}\\
 \hline
 \hspace{0.5in} girth $\geq 4$ & 9 {\tiny (Corollary~\ref{cor:thresholdtrianglefreeplanar})}&9 {\tiny (Corollary~\ref{cor:thresholdtrianglefreeplanar})}\\
 \hline
 \hspace{0.5in} $8\geq$ (girth $g$) $\geq 5$ & $13-g$ {\tiny (Corollary~\ref{cor:cographplgirth})}&9 {\tiny (Corollary~\ref{cor:thresholdtrianglefreeplanar})}\\
 \hline
 \hspace{0.5in} girth $\geq 13$ & 3 {\tiny (Corollary~\ref{cor:cographplgirth})}&9 {\tiny (Corollary~\ref{cor:thresholdtrianglefreeplanar})}\\
 \hline
 Partial 2-tree & 4 {\tiny (Corollary~\ref{cor:cogdimp2tree})}& 9 {\tiny (Corollary~\ref{cor:thresholdp2tree})}\\
 \hline
 Outerplanar & 4 {\tiny (Corollary~\ref{cor:cogdimp2tree})}& 6 {\tiny (Corollary~\ref{cor:thresholdouterplanar})}\\
 \hline
 \hspace{0.5in} weak dual is a path&3 {\tiny (Corollary~\ref{cor:weakdualpath})}&3 {\tiny (Corollary~\ref{cor:weakdualpath})}\\
 \hline
 Planar bipartite & 4 {\tiny (Corollary~\ref{cor:plbipthdim})}& 4 {\tiny (Corollary~\ref{cor:plbipthdim})}\\
  \hline
 Forest & 2 {\tiny (Corollary~\ref{cor:cographdimforests})}& 4 {\tiny (Corollary~\ref{cor:plbipthdim})}\\
 \hline
 \end{tabular}
 \caption{Upper bounds on the cograph and threshold dimensions of some subclasses of planar graphs}
 \label{tab:results}
 \end{table}

\noindent We conclude with some open questions.
\medskip

\noindent\textbf{Question.} Can the upper bounds shown for the cograph dimension and threshold dimension for any of the classes of graphs studied be improved?
\medskip

\noindent\textbf{Question.} Does there exist a linear upper bound on the threshold dimension in terms of the treewidth of the graph?
\medskip

\noindent\textbf{Question.} Does there exist a planar graph whose cograph dimension is more than 3?
\bigskip

\noindent\textbf{Acknowledgements.}
The authors would like to thank T.~Karthick for his help in simplifying the proof of Theorem~\ref{thm:startight}.

\bibliography{cogdim}

\begin{thebibliography}{10}

\bibitem{adiga}
{\sc A.~Adiga, D.~Bhowmick, and L.~S. Chandran}, {\em Boxicity and poset
  dimension}, SIAM Journal on Discrete Mathematics, 25 (2011), pp.~1687--1698.

\bibitem{aravindcrs}
{\sc N.~R. Aravind and C.~R. Subramanian}, {\em Intersection dimension and
  graph invariants}, Discussiones Mathematicae Graph Theory,
  (doi:10.7151/dmgt.2173).

\bibitem{Bodlaender98}
{\sc H.~L. Bodlaender}, {\em A partial $k$-arboretum of graphs with bounded
  treewidth}, Theoretical Computer Science, 209 (1998), pp.~1--45.

\bibitem{chandranseries}
{\sc A.~Bohra, L.~S. Chandran, and J.~K. Raju}, {\em Boxicity of
  series-parallel graphs}, Discrete Mathematics, 306 (2006), pp.~2219 -- 2221.

\bibitem{borodin1979acyclic}
{\sc O.~V. Borodin}, {\em On acyclic colorings of planar graphs}, Discrete
  Mathematics, 25 (1979), pp.~211--236.

\bibitem{borodin1999acyclic}
{\sc O.~V. Borodin, A.~V. Kostochka, and D.~R. Woodall}, {\em Acyclic
  colourings of planar graphs with large girth}, Journal of the London
  Mathematical Society, 60 (1999), pp.~344--352.

\bibitem{Brandstadt:1999:GCS:302970}
{\sc A.~Brandst\"{a}dt, V.~B. Le, and J.~P. Spinrad}, {\em Graph Classes: A
  Survey}, Society for Industrial and Applied Mathematics, Philadelphia, PA,
  USA, 1999.

\bibitem{bu2009star}
{\sc Y.~Bu, D.~W. Cranston, M.~Montassier, A.~Raspaud, and W.~Wang}, {\em Star
  coloring of sparse graphs}, Journal of Graph Theory, 62 (2009), pp.~201--219.

\bibitem{chandran2007boxicity}
{\sc L.~S. Chandran and N.~Sivadasan}, {\em Boxicity and treewidth}, Journal of
  Combinatorial Theory, Series B, 97 (2007), pp.~733--744.

\bibitem{chvatalhammer}
{\sc V.~Chv{\'a}tal and P.~L. Hammer}, {\em Aggregations of inequalities},
  Studies in Integer Programming, Annals of Discrete Mathematics, 1 (1977),
  pp.~145--162.

\bibitem{corneil1981complement}
{\sc D.~G. Corneil, H.~Lerchs, and L.~S. Burlingham}, {\em Complement reducible
  graphs}, Discrete Applied Mathematics, 3 (1981), pp.~163--174.

\bibitem{cozzens1984threshold}
{\sc M.~B. Cozzens and R.~Leibowitz}, {\em Threshold dimension of graphs}, SIAM
  Journal on Algebraic Discrete Methods, 5 (1984), pp.~579--595.

\bibitem{cozzens1989dimensional}
{\sc M.~B. Cozzens and F.~S. Roberts}, {\em On dimensional properties of
  graphs}, Graphs and Combinatorics, 5 (1989), pp.~29--46.

\bibitem{Diestel}
{\sc R.~Diestel}, {\em Graph Theory, 4th Edition}, vol.~173 of Graduate texts
  in mathematics, Springer, 2012.

\bibitem{fertin2004star}
{\sc G.~Fertin, A.~Raspaud, and B.~Reed}, {\em Star coloring of graphs},
  Journal of Graph Theory, 47 (2004), pp.~163--182.

\bibitem{foldes1977split}
{\sc S.~Foldes and P.~L. Hammer}, {\em Split graphs having {D}ilworth number
  two}, Canadian Journal of Mathematics, 29 (1977), pp.~666--672.

\bibitem{dalu}
{\sc M.~C. Francis and D.~Jacob}, {\em The lexicographic method for the
  threshold cover problem}.
\newblock https://arxiv.org/abs/1912.05819.

\bibitem{gavril}
{\sc F.~Gavril}, {\em The intersection graphs of subtrees in trees are exactly
  the chordal graphs}, Journal of Combinatorial Theory, Series B, 16 (1974),
  pp.~47--56.

\bibitem{gimnes}
{\sc J.~Gimbel and J.~Ne\v{s}et\v{r}il}, {\em Partitions of graphs into
  cographs}, Discrete Mathematics, 310 (2010), pp.~3437--3445.

\bibitem{golumbic}
{\sc M.~C. Golumbic}, {\em Algorithmic graph theory and perfect graphs},
  vol.~57, Elsevier, 2004.

\bibitem{hartman1991grid}
{\sc I.~B.-A. Hartman, I.~Newman, and R.~Ziv}, {\em On grid intersection
  graphs}, Discrete Mathematics, 87 (1991), pp.~41--52.

\bibitem{hellmuth2015symbolic}
{\sc M.~Hellmuth and N.~Wieseke}, {\em On symbolic ultrametrics, cotree
  representations, and cograph edge decompositions and partitions}, in
  International Computing and Combinatorics Conference, Springer, 2015,
  pp.~609--623.

\bibitem{kierstead2009star}
{\sc H.~A. Kierstead, A.~K{\"u}ndgen, and C.~Timmons}, {\em Star coloring
  bipartite planar graphs}, Journal of Graph Theory, 60 (2009), pp.~1--10.

\bibitem{kratochvil1994intersection}
{\sc J.~Kratochv\'il and Z.~Tuza}, {\em Intersection dimensions of graph
  classes}, Graphs and Combinatorics, 10 (1994), pp.~159--168.

\bibitem{kundgen2010star}
{\sc A.~K{\"u}ndgen and C.~Timmons}, {\em Star coloring planar graphs from
  small lists}, Journal of Graph Theory, 63 (2010), pp.~324--337.

\bibitem{mahadev1995threshold}
{\sc N.~V.~R. Mahadev and U.~N. Peled}, {\em Threshold graphs and related
  topics}, vol.~56, Elsevier, 1995.

\bibitem{raschlesimon}
{\sc T.~Raschle and K.~Simon}, {\em Recognition of graphs with threshold
  dimension two}, in Proceedings of the Twenty-seventh Annual ACM Symposium on
  Theory of Computing, STOC '95, 1995, pp.~650--661.

\bibitem{fct}
{\sc N.~Robertson, D.~Sanders, P.~D. Seymour, and R.~Thomas}, {\em The
  four-colour theorem}, Journal of Combinatorial Theory, Series B, 70 (1997),
  pp.~2--44.

\bibitem{graphminors2}
{\sc N.~Robertson and P.~D. Seymour}, {\em Graph minors. {II}. {A}lgorithmic
  aspects of tree-width}, Journal of Algorithms, 7 (1986), pp.~309--322.

\bibitem{scheinerman}
{\sc E.~R. Scheinerman}, {\em Intersection Classes and Multiple Intersection
  Parameters}, PhD thesis, Princeton University, 1984.

\bibitem{thomassen1986interval}
{\sc C.~Thomassen}, {\em Interval representations of planar graphs}, Journal of
  Combinatorial Theory, Series B, 40 (1986), pp.~9--20.

\bibitem{timmons2008star}
{\sc C.~Timmons}, {\em Star coloring high girth planar graphs}, The Electronic
  Journal of Combinatorics, 15 (2008), p.~\#R124.

\bibitem{yannakakis1982complexity}
{\sc M.~Yannakakis}, {\em The complexity of the partial order dimension
  problem}, SIAM Journal on Algebraic Discrete Methods, 3 (1982), pp.~351--358.

\end{thebibliography}

\end{document}